\newtheorem{define}{Definition}
\newcommand\motnouv[1]{{\emph{#1}}}
\newcommand\A{\mathcal{A}}
\newcommand\B{\mathfrak{B}}
\newcommand\C{\mathfrak{C}}
\newcommand\SSS{\mathcal{S}}
\newcommand\f{\mathfrak{F}}
\newcommand\ttt{\mathfrak{t}}
\newcommand\RL{\mathfrak{R}}
\newcommand\III{\mathcal{ND}}
\newcommand\D{\mathcal{D}}
\newcommand\F{\mathcal{F}}
\newcommand\T{\mathcal{T}}
\newcommand\N{\mathbb{N}}
\newcommand\id{identifier}
\newcommand\idd{identifiers}
\begin{document}

\title{Homonym Population Protocols}
\titlerunning{Homonym Population Protocols}




\newcommand\at{}

\author{Olivier Bournez
  \and Johanne Cohen 
\and Mika\"el Rabie
}
%
\institute{O. Bournez \at LIX, Ecole Polytechnique,  91128 Palaiseau Cedex, France
\email{bournez@lix.polytechnique.fr}
\and
J. Cohen \at
Universit\'{e} de Paris-Sud,
LRI, B\^{a}timent 425,\\
F-91405 Orsay Cedex, France\\\email{Johanne.Cohen@lri.fr}
\and M. Rabie \at LIX, Ecole Polytechnique,  91128 Palaiseau Cedex,
France
and Université Paris-Dauphine, 116 avenue du Président Wilson 93100 Montreuil Sous Bois France
\email{mikael.rabie@lix.polytechnique.fr}
}

\date{Received: \today/ Accepted: date}

\maketitle

\begin{abstract}
The population protocol model was introduced by Angluin \emph{et  al.}
as a model of passively mobile anonymous finite-state agents. This model  computes a
  predicate on the multiset of their inputs via interactions by
  pairs. The original population protocol  model has been proved to compute only semi-linear
  predicates and has been recently extended in various ways.

In the community protocol model by Guerraoui and Ruppert, agents have unique identifiers but
  may only store a finite number of the identifiers they already heard
  about.  The  community protocol model    provides the power of a
  Turing machine with a $O(n\log n)$ space.
   We consider variations on the two above models and we obtain a whole
  landscape that covers and  extends  already known results.

Namely, by  considering the case of homonyms, that is to say the case when several agents may share the same
  identifier, we provide a hierarchy that goes from the case of no
  identifier   (population protocol    model)
  to the case of unique identifiers  (community protocol
  model). We obtain in particular 
  that any Turing Machine on space
  $O(\log^{O(1)} n)$
  can be simulated with at least $O(\log^{O(1)} n)$
  identifiers, a result filling a gap left open in all previous studies.

Our results also extend and revisit in particular the hierarchy
  provided by Chatzigiannakis \emph{et al.} on population protocols
  carrying Turing Machines on limited space, solving the problem of
  the gap left by this work between per-agent space $o(\log \log n)$ (proved to be
  equivalent to population protocols) and $O(\log n)$ (proved to be
  equivalent to Turing machines).
       
%

\end{abstract}

\section{Introduction}

Angluin \emph{et al.}  \cite{AspnesADFP2004} proposed a  
distributed computation model called {\em population protocols}. It is a minimal model that aims at modeling large
sensor networks with resource-limited anonymous mobile agents. The
mobility of the agents is assumed to be unpredictable (but has to respect some
fair scheduler) and 
pairs of agents can
exchange state information 
when they are close
together.  

The population protocol model is also considered as a computational model, in
particular computing predicates: Given some input configuration, the
agents have to decide whether it satisfies the predicate.
The population of agents has to eventually stabilize to a
configuration in which every agent is in an accepting state or a
rejecting one. This should not depend on the size of the population,
 i.e for any
size of input configuration. 

The seminal work of Angluin \emph{et
al.} \cite{angluin2007cpp,AspnesADFP2004} proved that predicates
computed by population protocols are precisely 
first-order formula in Presburger arithmetic
(equivalent to a semilinear set).  Subsets definable
in this way are rather restricted, as multiplication for example is not
expressible in Presburger
arithmetic. 
%
Several variants
of the original model have been 
studied in order to strengthen the population protocol model with
additional realistic and implementable assumptions, in order to improve the
computational power. This includes natural restrictions like modifying the assumptions between
agent's interactions (one-way communications \cite{angluin2007cpp},
particular interaction graphs \cite{AngluinACFJP2005}).  This also
includes the 
Probabilistic Population Protocol model that makes a random scheduling
assumption for interactions~\cite{AspnesADFP2004}.  Also fault tolerance have been taken account  for population protocols
\cite{Delporte-GalletFGR06}, including the self-stabilizing
solutions \cite{OPODIS}.  We refer to
\cite{PopProtocolsEATCS,ChatzigiannakisMS10} for a 
survey.

Among many variants of population protocols, the \emph{passively mobile
(logarithmic space) machine model} introduced by Chatzigiannakis \emph{et
  al.}  \cite{chatzigiannakis2011passively} generalizes  the population protocol model where finite state
agents are replaced by agents that correspond to arbitrary Turing
machines with $O(S (n))$ space per-agent, where $n$ is the number
of agents.  An exact characterization
\cite{chatzigiannakis2011passively} of computable predicates is given: this model can compute all symmetric predicates in $NSPACE(n S (n))$
as long as $S (n) = \Omega (\log n)$.  Chatzigiannakis \emph{et al.}
establish that  the
model with a space in agent in  $S(n)=o(\log \log n)$ is equivalent to population protocols, i.e. to the case
$S(n)=O(1)$. 

In parallel, \emph{community protocols} introduced by Guerraoui and
Ruppert~\cite{guerraoui2009names} are closer to the original
population protocol model, assuming \textit{a priori} agents with
individual very restricted computational capabilities.  In this model,
each agent has a unique
identifier and  can only store $O(1)$ other agent
identifiers, and only identifiers from agents that it met.  Guerraoui
and Ruppert~\cite{guerraoui2009names} using results about the
so-called storage modification machines \cite{schonhage1980storage},   proved
that such protocols 
simulate  Turing machines:  predicates computed by this
model with $n$ agents are precisely the predicates in $NSPACE(n \log n)$.

This work aims at obtaining a whole landscape that covers and extends
already known results about \emph{community protocol} models and
population protocols carrying Turing Machines on limited space.
  
First,   
 we drop the hypothesis of unique identifiers. That is to say,  agents may have homonyms. We obtain a hierarchy that goes from the case of no
  identifier (i.e. population protocol
  model) to the case of unique identifiers (i.e. community protocol
  model). In what follows, $f(n)$ denotes the number of distinct available
  identifiers on a population with $n$ agents.  Notice that the idea
  of having less identifiers than agents, that is to say of having ``homonyms'',  has already been considered
  in other contexts or with not closely related problematics 
  \cite{Delporte-Gallet:2011,Delporte-Gallet:2012,ArevaloAIJR12,Guiseppe2013}.

Second, our results also extend the \emph{passively mobile
machine model}. In particular,
Chatzigiannakis \emph{et
  al.}  \cite{chatzigiannakis2011passively}  solved the cases
$S(n)=o(\log \log n)$ (equivalent to population protocols) and
$S(n)=O(\log n)$ (equivalent to Turing machines).
We provide a characterization for the case $S(n)=O(\log \log n)$:
the model is equivalent to
  $\bigcup_{k \in \N}SNSPACE(\log^k n)$ (see Table \ref{tab:Chatzigiannakis}).


To sum up, Tables~\ref{tab:Guerraoui}
and~\ref{tab:Chatzigiannakis} summarize  our results. 
$MNSPACE(S(n))$ (respectively:  $SMNSPACE(S(n))$) is the set of
$f$-symmetric\footnote{These classes are defined in Section~\ref{sec:def}.} (resp. also stable under
the permutation of the input multisets) 
languages recognized by non deterministic Turing
machines on space $O(S(n))$. 
   \begin{table}
\vspace{-0.3cm}
\begin{center}
\begin{tabular}{c|c}
$f(n)$ identifiers   &  Computational power\\\hline
$O(1)$ & Semilinear Sets \\
& \cite{angluin2007cpp,AspnesADFP2004} \\  \hline
$\Theta(\log^r n)$ &
$\bigcup_{k \in \mathbb{N}} MNSPACE\left(\log^k n
\right)$ \\
with $r\in\mathbb{R}_{>0}$&   Theorem \ref{th:main} \\ 
\hline
$\Theta(n^\epsilon)$&  $MNSPACE(n\log n)$ \\
with $\epsilon>0$& Theorem \ref{th:sqrt} \\
\hline 
$n$ &  $NSPACE(n\log n)$\\
& \cite{guerraoui2009names} \\
\end{tabular}
\end{center}
\caption{Homonym population
protocols with $n$ agents and $f(n)$ distinct identifiers.}
\label{tab:Guerraoui}

    \end{table}

\begin{table}
\vspace{-1.5cm}
\begin{center}
\begin{tabular}{c|c}
 Space per agent $S(n)$ & Computational power \\\hline
 $O(1)$  & Semilinear Sets  \\
                      &  \cite{angluin2007cpp,AspnesADFP2004}  \\  \hline
  $o(\log\log n)$    &  Semilinear Sets  \\
                                             &
              \cite{chatzigiannakis2011passively} \\ \hline
   $\Theta(\log\log n)$    &  $\bigcup_{k \in \mathbb{N}}SNSPACE(\log^k n)$  \\
                                     &  Theorem \ref{th:grec}\\ \hline
   $\Omega(\log n$) & $SNSPACE(nS(n))$ \\
   &    \cite{chatzigiannakis2011passively} \\
\end{tabular}
\end{center}
      \caption{Passively mobile
machine model~\cite{chatzigiannakis2011passively}
with $n$ agents and space $S(n)$ per agent.}
\label{tab:Chatzigiannakis}

 \end{table}

The document is organized as follows. Section~\ref{sec:2} introduces the formal definitions of the different
models and main known results. 
Section~3 introduces the definition of a new class of Turing Machines that will
help for the characterization of our models.
Section~\ref{sec:3p} is devoted to the case where there is
a polylogarithmic number of different identifiers in the population.
Section~\ref{resthirera} deals with the case where the population's number of identifiers is constant,
$o(\log\log n)$ and $\omega(n^{\epsilon})$. 
Section~\ref{sec:4demi} treats the case $S(n)=O(\log\log n)$ in the
passively mobile machine model~\cite{chatzigiannakis2011passively} (see Table 2).
Finally Section~\ref{sec:5} is  a summary of our results with some open questions.

\section{Models}
 \label{sec:2}

Population protocols have been mostly considered up to now as computing predicates:
one considers protocols such that starting from some initial
configuration, any fair sequence of pairwise interactions must
eventually lead to a state where all agents agree and either accept
or reject. A protocol computes a predicate if it accepts the inputs
verified by this predicate (and refuses the others).
%
Algorithms are assumed to be uniform: the protocol description must be 
independent of the number $n$ of the agents.

The models we consider are variations of the {\em community
  protocol} model~\cite{guerraoui2009names}. This latter model is in
turn   considered as an extension of the population protocols.
%
In all these models, 
a collection of
agents is considered. Each agent has a
finite number of possible states and an input value, that
determines  its initial state. Evolution of states of agents is the
product of pairwise interactions between agents: when two agents meet,
they exchange information about their states and simultaneously update
their own state according to a joint transition function, which
corresponds to the algorithm of the protocol. The precise
sequence of agents involved under the pairwise interactions is 
under the control of any fair scheduler.
The considered
notion of fairness for population protocols states that every 
configuration that can be reached infinitely often is eventually
reached.


%
%
%

In order to avoid multiplication of names, we will write community protocols for the
 model introduced by Guerraoui and Ruppert \cite{guerraoui2009names}, and homonym population protocols
for our version. 
The main difference between the two is the following: 
Let $U$ be the infinite set containing the possible
identifiers. 
We assume that the possible identifier set $U$ is not 
arbitrary: we assume that $U \subset \N$. 
We also assume these  identifiers are not necessarily
unique: several agents may have the same  identifier.
In a population of size $n$, we suppose that there are $f(n)$ distinct identifiers. 


More formally, a  community  protocol / homonym population protocol
algorithm is then specified by:
\begin{enumerate}
\item an infinite set $U$ of the possible identifiers.
In the Homonym case, $U=\mathbb{N}$.
\item a function $f: \N \to \N$ mapping the size of the population to the 
number of \idd{}  appearing in this population. In the community
protocol case, $f$ is identity. 
\item a finite set $B$ of possible internal states;
\item an integer $d \ge 0$ representing the number of  {identifier}s that can be
  remembered by an agent;
\item some input alphabet $\Sigma$ and some output alphabet $Y$;
\item an input map $\iota : \Sigma\rightarrow B$ and an output map $\omega: B\rightarrow Y$;
\item a transition function  $\delta:$ $Q^2\rightarrow Q^2$, with
  $Q=B\times U \times (U
  \cup \{\_\})^d$.
\end{enumerate}

\begin{remark}
For sake of clarity,  $\delta$ is a function, but this could be a relation as in the community protocol definition
\cite{guerraoui2009names}, without changing our results. 
\end{remark}

\begin{remark} Unlike in the population protocol model, each agent's
  state is given by its  identifier and   $d$ identifiers stored.  
If any of the $d$ slots is not currently storing an identifier, it
contains the \emph{null}  identifier $\_ \not\in U$.  In other words,
$Q=B\times U \times (U
  \cup \{\_\})^d$ is the set
of possible agent states.  
\end{remark}

The transition function $\delta$  indicates the
result (state shift) of a pairwise interaction: when agents in respective state
$q_1$ and $q_2$ meet, they move to respectively state $q'_1$ and
$q'_2$ whenever $\delta(q_1,q_2)=(q'_1,q'_2)$.

As in the community model \cite{guerraoui2009names},  agents store only  identifiers
they have learned from other agents (otherwise, they could be used as
an external way of storing arbitrary information and this could be used as
a space for computation in a non interesting and trivial way): 
if $\delta(q_1,q_2)=(q'_1,q'_2)$, and $id$ appears in
  $q'_1,q'_2$ then $id$ must appear in $q_1$ or in $q_2$.

The identifiers of agents
are chosen by some adversary, and not under the control of the
program. 

We add two hypothesis to the community model \cite{guerraoui2009names}:
agents need to know when an identifier is equal to $0$ and when
two identifiers are consecutive (i.e. $id_1=id_2+1$).
More formally, this is equivalent to say that 
%
%
  whenever $\delta(q_1,q_2)=(q'_1,q'_2)$, let  $u_1< u_2< \dots< u_k$ 
be the distinct  identifiers that appear in any of
  the four states $q_1,q_2,q'_1,q'_2$.
Let $v_1 < v_2 < \dots < v_k$ be identifiers such that  $u_1=0\Leftrightarrow v_1=0$ and $v_i+1=v_{i+1}\Leftrightarrow u_i+1=u_{i+1}$.
If $\rho(q)$ is the state obtained from $q$ by replacing all
  occurrences of each identifier $u_i$ by $v_i$, then we require that
  $\delta(\rho(q_1),\rho(q_2))=(\rho(q'_1),\rho(q'_2))$. 

 We also suppose that
every identifiers in $[0,f(n)-1]$ are present in the population. As we want to be
minimal,  those are the only hypothesis we make on identifiers in the following
sections. 



\begin{remark} \ 

\vspace{-0.3cm}
\begin{itemize}
\item This weakening of the community protocols does not change the computational
power in the case where all agents have distinct identifiers.
\item 
Our purpose is to establish results with minimal hypothesis. Our
results work  when identifiers are consecutive integers, say
$\{0,1,2,\dots,f(n)-1\}$.  
This may be thought as a restriction. This is why we weaken to the
above hypothesis, which seems to be the minimal hypothesis to make our
proofs and constructions correct. 

We conjecture that without the possibility to know if an identifier is
the successor of another one, the model is far too weak. Without this
assumption, our first protocol (in  Proposition~$1$) does not work.
\item
Notice that 
knowing whether an identifier is equal to $0$ is not essential, but eases
the explanation of our counting protocol of Proposition~$1$.
\end{itemize}
\end{remark}

  From now on, an agent in state $q$ with initial identifier $k$ and
  $L=k_1,\ldots ,k_d$ the list storing the $d$ identifiers is denoted
  by $q_{k,L}$ or $q_{k,k_1,\ldots ,k_d}$.  If the list $L$ is not
  relevant for the rule, we sometimes write $q_k$ where $k$ is the agent's identifier.

A \emph{configuration} $C$ of the algorithm then consists of a finite vector of
elements from $Q$:  it is a set of $n$ agent's states.
 An \emph{input of size $n\ge2$} is given by $f(n)$ non empty multisets $X_i$
over alphabet $\Sigma$.
 An initial configuration for $n$ agents is a vector in $Q^n$
of the form $((\iota(x_j),i,\_,\dots,\_))_{0\le i < f(n), 1\le j\le \left|X_i\right|}$
where $x_j$ is the $j$th element of $X_i$. In other words, every agent $x_j$
starts in a basic state encoding $\iota(x_j)$, its associated
identifier  and
no other identifier stored in its $d$ slots (since it met no other agent). 


 If $C=(q^{(1)},q^{(2)},\dots,q^{(n)})$ and $C'=(p^{(1)},p^{(2)},\dots,p^{(n)})$ are two
configurations, then we say that $C \to C'$ ($C'$ is reachable from $C$
in a unique step) if  $C'$ is obtained by executing only one step of the transition function. In other words,  there are $i \neq j$ such that
$\delta(q^{(i)},q^{(j)})=(p^{(i)},p^{(j)})$ and $p^{(k)}=q^{(k)}$ for all $k$ different
from $i$ and $j$. An execution is a sequence of configurations $C_0,
C_1, \dots,$ such that $C_0$ is an initial configuration, and $C_i
\to C_{i+1}$ for all $i$. An execution is fair if for each
configuration $C$ that appears infinitely often and for each $C'$
such that $C \to C'$, $C'$ appears infinitely often.

\begin{example}[Leader Election]
  We want to design a protocol that performs a leader election, with the additional
  hypothesis that when the election has finished, all agents know the
  identifier of the leader.

  We adapt here a classical example of Population Protocol (Recall that in the  Population Protocol, this
  is not possible to store the identifier of the leader and hence this
  is not possible to compute the leader's identifier; Here this is
  possible as identifiers can be stored):
    Each agent with identifier $k$ starts with state $L_{k,\_}$,
  considering that the leader is an agent with identifier $k$. We want to design a protocol such that
  eventually at some time (i.e in a finite number of steps), there
  will be a unique agent in state $L_{k_0,k_0}$, where $k_0$ is the
  identifier of this unique agent, and all the other agents in state
  $N_{i,k_0}$ (where $i$ is its identifier).

 A protocol performing such a leader election is the following: 
  $f(n) =n$, $U=\N$, $B=\{L,N\}$, 
  $d=1$ (only the identifier of the current leader is stored), 
  $\Sigma=\{L\}$, $Y=\{True\}$, 
  $\iota(L)=L$, 
  $\omega(L)=\omega(N)=True$, $Q = B \times U \times (U\cup\{\_ \})$
and  $\delta$  such that the rules are: 
%
\begin{center}
\begin{tabular}{ l @{\hspace{0,2cm}} l @{$\rightarrow$} l
    @{\hspace{0,2cm}} l @{\hspace{1cm}} l }
  $L_{k,\_} $ & $ q$ & $ L_{k,k} $ & $  q$  & $ \forall k\in\N,\forall q\in Q$\\
  $L_{k,k} $ & $  L_{k',k'}$ & $ L_{k,k} $ & $  N_{k',k}$  & $ \forall k,k'$\\
$L_{k,k} $ & $  N _{i,k'} $ & $  L_{k,k} $ & $  N _{i,k}$ & $\forall k,k',i$\\
$  N _{i,k'} $ & $L_{k,k} $  &  $  N _{i,k}$  & $  L_{k,k} $ & $\forall k,k',i$\\
$N_{i,k} $ & $ N_{i',k'}$ & $N_{i,k} $ & $ N_{i',k'}$ & $\forall k,k',i,i'$\\
\end{tabular}
\end{center}

By the fairness assumption, this protocol will
reach a configuration where there is exactly
one agent in state $L_{k_0,k_0}$ for some identifier $k_0$.
Then, by fairness again, this protocol will reach the
final configuration such that only agent is in state $L_{k_0,k_0}$ and all the other agents are in state $N_{i,k_0}$ with ${i\ne k_0}$.

\end{example}

A configuration has an \motnouv{Interpretation} $y\in Y$ if, for each agent in the
population, its state $q$ is such that $\omega(q)=y$. If there are two agents in state
$q_1$ and $q_2$ such that $\omega(q_1)\ne\omega(q_2)$, then we say that the
configuration has \motnouv{no Interpretation}.
%
%
%
%
%
%
A protocol is said to \motnouv{compute the output} $y$ from an input $x$  if,
for each fair sequence $(C_i)_{i\in\mathbb{N}}$ starting from an
initial condition $C_0$ representing $x$, there exists $i$
such that, for each $j\ge i$, $C_j$ has the interpretation $y$.
The protocol is said to compute function $h$ if
it computes $y=h(x)$ for all inputs $x$.
 A predicate is a function $h$ 
whose range is 
$Y=\{0,1\}$. As usual, a predicate can also
be considered as a decision problem, and a decision problem is said to
be decided if its characteristic function is computed.



Observe that  population protocols 
\cite{angluin2007cpp,AspnesADFP2004} are the special case  of the
protocols considered here where $d=0$ and $f(n)=1$.
%
The following is known for the population protocol model~\cite{angluin2007cpp,AspnesADFP2004}: 

\begin{theorem}[Population Protocols \cite{angluin2007cpp}]
Any predicate over $\N^k$ that is first order definable in
Presburger's arithmetic can be computed by a population protocol.
Conversely, 
any predicate computed by a population protocol is a subset of
$\N^k$ first order definable in
Presburger's arithmetic.
\end{theorem}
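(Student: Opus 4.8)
The plan is to prove the two inclusions separately, treating the easier completeness direction first and reserving the bulk of the effort for the converse.

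For the forward direction (every Presburger predicate is computable), I would rely on the classical characterization that the Presburger-definable subsets of $\N^k$ are exactly the \emph{semilinear} sets, and that every such set is a finite Boolean combination of \emph{threshold predicates} of the form $\sum_i a_i x_i \ge c$ and \emph{congruence predicates} of the form $\sum_i a_i x_i \equiv c \pmod m$, with $a_i, c \in \Z$ and $m \in \N$. The strategy then reduces to two tasks: first, exhibit a population protocol for each atomic predicate; second, show the computable predicates are closed under Boolean operations. For a congruence predicate, each agent keeps a residue in $\Z/m\Z$ encoding its weighted input, and a distinguished leader (elected as in the Leader Election example) accumulates the running sum modulo $m$ while zeroing the partners it absorbs; the finite state space $\{0,\dots,m-1\}$ makes this immediate. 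For a threshold predicate the same leader-based accumulation is used, but the accumulated value is saturated at two sentinel bounds straddling $c$, so that only the sign of $\sum_i a_i x_i - c$ must be remembered, keeping the state set finite. Closure under negation is obtained by composing the output map $\omega$ with the swap $0 \leftrightarrow 1$; closure under conjunction (hence all Boolean combinations) is obtained by the standard product construction that runs two protocols in parallel on the same population and outputs the logical AND of their verdicts.

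For the converse (every computable predicate is Presburger), I would identify a configuration with its multiset of states, i.e. a vector in $\N^{|Q|}$, so that the reachability relation $\to^*$ becomes an additive relation on $\N^{|Q|}$. The goal is to show that the set of inputs from which every fair execution stabilizes to acceptance is semilinear, and then invoke the equivalence semilinear $=$ Presburger-definable. The key structural notion is that of a \emph{stable} configuration, one all of whose reachable successors share the same interpretation, since fairness forces every execution to reach, and then remain within, the set of stable accepting configurations. I would first argue that the set of stable configurations of each fixed interpretation is semilinear, exploiting that stability is a condition over the finitely many transition rules together with an eventual-periodicity property of reachability on large configurations. From there one lifts to the set of \emph{accepting inputs} by a projection and closure argument on $\N^{|Q|}$.

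The main obstacle is precisely this converse direction: proving that the accepted set is semilinear. Reachability in such systems is close in spirit to vector addition systems, whose reachability sets are \emph{not} semilinear in general, so the argument must exploit the special features of the model, namely the symmetry and totality of the pairwise transition function and, crucially, the fact that one characterizes \emph{stably} computable predicates rather than arbitrary reachability sets. The delicate step is the pumping and periodicity analysis showing that, once the input is large enough in each coordinate, membership in the accepted set depends only on finitely much boundary information together with residues modulo fixed periods, which is exactly what semilinearity encodes. This is the technically heaviest part and the reason the result is nontrivial; the completeness direction, by contrast, is a routine assembly of the gadgets described above.
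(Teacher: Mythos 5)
First, note that the paper does not prove this statement at all: it is quoted verbatim as a known result of Angluin \emph{et al.} \cite{angluin2007cpp,AspnesADFP2004} and used as a black box, so there is no ``paper proof'' to compare against; your proposal has to be judged against the literature it cites. Your forward direction is correct and is the standard argument: semilinear sets are Boolean combinations of threshold and modulo predicates, each atomic predicate admits a leader-based accumulation protocol with saturated or modular counters, and computability is closed under product constructions and output relabelling. No complaint there.

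The converse direction, however, contains a genuine gap, and it sits exactly where you place the difficulty. Your intermediate claim that the output-stable configurations of a fixed interpretation form a semilinear subset of $\N^{|Q|}$ is true and provable (the non-stable configurations are upward closed under componentwise addition of agents, hence finitely generated by Dickson's lemma, hence semilinear, and so is the complement), but the final step --- ``one lifts to the set of accepting inputs by a projection and closure argument'' --- is not a projection at all. The accepted inputs are those initial configurations from which \emph{every} fair execution eventually enters and remains in the stable accepting set; this is a universally quantified predecessor condition, not the image of a semilinear set under a linear map, and semilinearity is not preserved by taking $\to^*$-predecessors in general (as you yourself observe via the comparison with vector addition systems). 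Closing this gap is the entire content of the Angluin--Aspnes--Eisenstat theorem: it requires the truncation-map machinery, a pumping argument controlled by Higman's and Dickson's lemmas, and a careful analysis showing that stabilization behaviour on large inputs is determined by bounded truncations plus fixed periods. As written, your proof establishes the easy inclusion and correctly diagnoses, but does not prove, the hard one.
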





For the community protocols, Guerraoui and Ruppert~\cite{guerraoui2009names} 
established     that computable predicates
are exactly those of $NSPACE(n\log n)$, i.e. those of  the class
  of languages recognized in non-deterministic space $n\log n$.

Notice that their
convention \cite{guerraoui2009names}  of input   requires that the input be distributed on agents ordered by
identifiers. 
%


\begin{theorem}[Community Protocols
\cite{guerraoui2009names}] \label{th:eux}
Community protocols can compute  any predicate in $NSPACE(n\log n)$. 
Conversely, any predicate computed by such a 
community protocol is in the class $NSPACE(n\log n)$.
\end{theorem}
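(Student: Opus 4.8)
The plan is to prove the two inclusions separately, relying on the standard fairness characterization of stable computation for population-type protocols: a protocol computes output $y$ from an initial configuration $C_0$ if and only if every configuration $C$ reachable from $C_0$ can in turn reach an \emph{output-stable} configuration for $y$, i.e. a configuration $C'$ from which every reachable configuration has interpretation $y$.

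For the upper bound (any computable predicate lies in $NSPACE(n\log n)$), the first observation is that a configuration of $n$ agents is completely described by $O(n\log n)$ bits: each agent contributes $O(1)$ bits for its internal state in $B$ together with its own identifier and its $d$ stored identifiers, each identifier being an element of $\{0,\dots,n-1\}$ and hence codable on $O(\log n)$ bits. A single transition $C \to C'$ can be checked within this space, so the one-step relation is decidable in space $O(n\log n)$. Consequently configuration reachability $C \to^{*} C'$ is a nondeterministic reachability problem over the configuration graph, solvable in $NSPACE(n\log n)$ by guessing the path one configuration at a time. I would then express ``output-stable for $y$'' and the full ``computes $y$'' predicate as nested reachability conditions following the characterization above; each universal quantifier over reachable configurations is handled by complementing an $NSPACE(n\log n)$ predicate, which stays in $NSPACE(n\log n)$ by the Immerman–Szelepcsényi closure of nondeterministic space under complement. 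This places the decision problem in $NSPACE(n\log n)$.

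For the lower bound (every predicate of $NSPACE(n\log n)$ is computable), the idea is to use the $n$ distinct identifiers as the backbone of an organized, addressable memory of total size $\Theta(n\log n)$: each agent becomes one cell holding $O(\log n)$ bits through its finite state and its $O(1)$ stored identifiers, and the order (and successor) structure on identifiers lets the agents behave as a linearly ordered tape on which a simulated head can be moved left and right. Following Guerraoui and Ruppert, the cleanest route is to first simulate a Schönhage storage modification machine \cite{schonhage1980storage}, whose pointer manipulations are exactly what identifier comparison and storage provide, and then appeal to the equivalence of such machines with Turing machines on the same space bound. The input is read from the identifier-ordered initial configuration as in \cite{guerraoui2009names}.

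The main obstacle, and where the care must go, is reconciling the nondeterministic-acceptance semantics of $NSPACE$ with the ``all fair executions eventually stabilize'' semantics of the protocol. I would use the scheduler's choices to realize the machine's nondeterministic branching, and design the simulation to be restartable: whenever a branch halts in a rejecting or space-exceeding configuration it is reset and retried, whereas reaching an accepting configuration latches all agents permanently into the accepting output. Fairness then guarantees that if an accepting computation exists it is eventually explored and the protocol stabilizes to accept, while if none exists no branch ever latches and the protocol stabilizes to reject. Making the self-organization of agents, the head movement, and this latch/reset mechanism robust under adversarial-but-fair pairwise scheduling is the technically delicate part of the construction.
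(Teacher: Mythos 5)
Your proposal is correct and follows essentially the same route as the original Guerraoui--Ruppert argument that the paper cites for this theorem: the upper bound via the $O(n\log n)$-bit configuration graph, nondeterministic reachability, and Immerman--Szelepcs\'enyi (the very technique the paper reuses in its Proposition~\ref{th:olivier}), and the lower bound via simulation of Sch\"onhage storage modification machines with fairness-driven restart/latch handling of nondeterminism. No substantive gap to report.
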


Notice that Guerraoui and Ruppert \cite{guerraoui2009names}
established that this holds even with Byzantine agents, under some
rather strong conditions.  We now determine what can be computed when
the number of identifiers $f(n)$ is smaller than $n$. This will be
done by first considering some basic protocols.

\section{Definitions}
\label{sec:3}

Our  main aim  is to determine exactly what can be computed
with homonym population protocols. We first need to introduce
Turing Machines that has inputs analog to homonym protocols. To perform it
we  will define the notion of \emph{$(f,n)$-symmetry} language,
and   the  class $MNSPACE$.


A multiset of elements from some set $\Sigma$ is also  seen as a
word over alphabet $\Sigma$: list the elements of the multiset in any
order, and consider the list as a word. Of course, any permutation of
this word corresponds to the same multiset. 

We focus on languages made of $f(n)$ multisets over alphabet
$\Sigma$. From the above remark, this is seen as words over alphabet
$\Sigma\cup\{\#\}$, and hence as $(f,n)$-Symmetric languages in the
following sense:

\begin{define}\label{defmnspace}
 Let $f:\mathbb{N}\to\mathbb{N}$ be a function. 
A Language $L$ over alphabet $\Sigma\cup\{\#\}$ is  \motnouv{$(f,n)$-Symmetric} if  and only if:
\begin{itemize}
\item \#$\not\in\Sigma$;
\item Words of $L$ are all of the form $w=x_1$\#$x_2$\#$\ldots$\#$x_{f(n)}$, with $|x_1|+|x_2|+\ldots+|x_{f(n)}|=n$ and
$\forall i$, $x_i\in\Sigma^+$;
\item If, $\forall i$, $x'_i$ is a permutation of $x_i$, and if $x_1$\#$x_2$\#$\ldots$\#$x_{f}\in L$, 
then \\ $x'_1$\#$x'_2$\#$\ldots$\#$x'_{f}\in L$;

\end{itemize}
\end{define}

The complexity class associated to   $(f,n)$-symmetric languages is :

\begin{define}[$MNSPACE(S(n))$]
Let $S$ be a function 
$\mathbb{N}\to \mathbb{N}$. The set of $(f,n)$-symmetric languages recognized by Non Deterministic
Turing Machines on  space $O(S(n))$ is the class   \motnouv{$MNSPACE(S(n),f(n))$}
 or  \motnouv{$MNSPACE(S(n))$} when $f$ is unambiguous.
\end{define}

\begin{remark} $NSPACE(S(n))=MNSPACE(S(n),n)$ since each multiset
must then contain exactly one element.
     $SNSPACE(S(n))=MNSPACE(S(n),1)$  as accepting a multiset
is exactly being stable under input permutation.
\end{remark}

Now, we will define a collection of languages called \emph{Included Language} in $MNSPACE(\log n)$. 

\begin{define}
Let $I$ be a positive integer and let $(X_i)_{i\le I}$ be a finite
sequence of multisets of elements from $\Sigma$.
$(X_i)_{i\le I}$ is a  \motnouv{included language}, 
if and only if, for all $i<I$, $\emptyset\ne X_{i+1}\subset X_{i}$.
\end{define}

This corresponds to finite sequences of non-empty multisets where each
multiset is included in the previous one. Using the above
representation trick (representing a multiset by a word up to
permutation) this can also be considered as a $(I,n)$-Symmetric
language.

\begin{proposition}
Any included language $(X_i)_{i\le I}$ is in $MNSPACE(\log n)$ where $\displaystyle n = \sum_{1 \le i \le I} |X_{i}|$.
\end{proposition}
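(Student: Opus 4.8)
The plan is to exhibit a deterministic logarithmic-space decision procedure for the included language; since deterministic space is contained in nondeterministic space, this places the language in $MNSPACE(\log n)$. The crucial observation is that the alphabet $\Sigma$ is fixed and finite, so checking the multiset inclusions $X_{i+1}\subset X_i$ reduces to comparing, symbol by symbol, the number of occurrences of each letter of $\Sigma$ in consecutive blocks of the input word $w=x_1\#x_2\#\cdots\#x_I$.

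First I would fix the input format and the structural checks. The machine reads $w$ over $\Sigma\cup\{\#\}$ on its read-only input tape and uses a work tape of size $O(\log n)$. A single left-to-right scan suffices to parse the blocks delimited by $\#$, to confirm that $\#\notin\Sigma$ occurs only as a separator, and to verify that each block $x_i$ is non-empty, i.e.\ $x_i\in\Sigma^+$. These checks require only a constant number of flags.

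The heart of the procedure is the inclusion test, performed during the same pass. For each letter $a\in\Sigma$ the machine keeps an occurrence counter; since every count is bounded by $n$ it fits in $O(\log n)$ bits, and because $|\Sigma|$ is a constant the whole vector of counters occupies $O(|\Sigma|\log n)=O(\log n)$ space. Processing one block at a time, the machine holds in a vector $c^{\mathrm{prev}}$ the multiplicities of the previous block and computes in $c^{\mathrm{cur}}$ those of the current one; it then checks componentwise that $c^{\mathrm{cur}}_a\le c^{\mathrm{prev}}_a$ for every $a\in\Sigma$, which is exactly $X_{i+1}\subseteq X_i$, together with at least one strict inequality whenever $\subset$ is read strictly (equivalently $|x_{i+1}|<|x_i|$). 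It then copies $c^{\mathrm{cur}}$ into $c^{\mathrm{prev}}$ and moves on, accepting iff all blocks are non-empty and every consecutive inclusion holds. Note that the index $I$ and the block contents need never be stored explicitly, since the comparison is local to consecutive blocks and the input head advances monotonically.

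The only point requiring care is the space accounting, and it is immediate: a constant number of counters, each of magnitude at most $n$, plus finitely many bookkeeping flags, giving $O(\log n)$ in total. I do not expect a genuine obstacle here; the finiteness of $\Sigma$ is precisely what keeps the number of counters constant, turning multiset comparison into a log-space task rather than one demanding space proportional to the data. The $(I,n)$-symmetry of the language has already been noted in the text, so establishing the space bound completes the argument.
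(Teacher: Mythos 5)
Your proof is correct and takes essentially the same route as the paper's: reduce each inclusion $X_{i+1}\subset X_i$ to a per-symbol comparison of occurrence counts between consecutive blocks, which needs only $O(|\Sigma|\log n)=O(\log n)$ space since $\Sigma$ is fixed. You are merely more explicit about the single-pass implementation and the input-format checks, which the paper leaves implicit.
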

\begin{proof}
Let $I$ be a positive integer and let $(X_i)_{i\le I}$ be an included language.
$|X_i|_s$ denotes the number of $s$ in a multiset $X_i$. 

To check whether  $X_{i+1}\subset X_i$ is equivalent to check, whether $|X_{i+1}|_s-|X_{i}|_s\le0$
for each $s\in\Sigma$. 
This can be performed on a space $O(\log n)$. 
In other words,  the language is accepted if and only if there do not exist $i$ and $s$ such that
$|X_{i+1}|_s-|X_{i}|_s\le0$.\hfill \qed

\end{proof}



\begin{proposition} Let $I$ be a positive integer and let $(X_i)_{i\le I}$ be an included language
and let $\displaystyle n = \sum_{1 \le i \le I} |X_{i}|$.

 $(X_i)_{i\le I}$ can be decided by a homonym population protocol of $n$ agents having $f(n) =I$ distinct identifiers.
\end{proposition}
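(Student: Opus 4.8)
The plan is to reduce the inclusion test to a family of per-symbol counting inequalities and to realise each of them by a distributed greedy matching between agents sitting on consecutive identifiers. Recall that the input places $|X_i|$ agents on identifier $i$, for $0 \le i \le I-1$, each carrying an element of $\Sigma$, and that (writing $|X_i|_s$ for the number of occurrences of $s$ in $X_i$, as before) $X_{i+1}\subset X_i$ holds as multisets if and only if $|X_{i+1}|_s \le |X_i|_s$ for every symbol $s \in \Sigma$. So the whole included-language condition amounts to: for every $i$ and every $s$, the number of $s$-agents on identifier $i+1$ is at most the number of $s$-agents on identifier $i$. I would test all of these simultaneously. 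Using the successor test (an agent recognises when the identifier it meets is exactly one less than its own) together with the carried symbol, an unmatched agent $A$ on identifier $i+1$ that meets a still-free agent $B$ on identifier $i$ carrying the same symbol \emph{claims} it: $A$ becomes matched and $B$ becomes taken. Each agent thus maintains two monotone one-bit flags, \emph{matched} (has it claimed a partner below?) and \emph{free/taken} (is it still available to a partner above?), and agents on identifier $0$ are matched by convention.

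The correctness of the matching rests on one clean observation: within a fixed symbol class the compatibility graph between the $s$-agents of layer $i+1$ and the $s$-agents of layer $i$ is complete bipartite, so any maximal matching is already maximum, and the greedy matching saturates the upper layer exactly when $|X_{i+1}|_s \le |X_i|_s$. Moreover the flags move only from free to taken and from unmatched to matched, so the process is monotone and, by fairness, stabilises. Consequently the included-language condition holds if and only if, in the stable configuration, every agent with identifier $\ge 1$ is matched; equivalently, writing $m(t)$ for the number of currently unmatched agents, $m$ is non-increasing, the condition holds iff $m$ reaches $0$, and it fails iff $m(t)\ge 1$ for every $t$.

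It remains to turn this into a single accept/reject value on which every agent eventually agrees, and this is where I expect the real difficulty. The naive idea — let each currently unmatched agent broadcast reject and let the value accept propagate otherwise — fails on both sides: since all agents with identifier $\ge 1$ start unmatched, transient rejects flood the population and must be cleared, yet a retractable reject oscillates forever when some agent stays permanently unmatched. The crux is therefore a termination-detection problem: deciding, in a way that actually settles, whether $m$ has reached $0$. I would resolve it by electing a unique leader (exactly as in the leader-election Example, which in addition lets every agent learn the leader's identifier) and by exploiting that once the matching has stabilised the \emph{matched} flags are permanently constant, so that from that point on the only absorbing configuration compatible with ``all matched'' is the all-accept one, while a genuinely unmatched witness — which the leader meets infinitely often — keeps a reject alive. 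I would drive this with a leader-controlled confirmation that is restarted whenever a new matching step occurs and that uses the linear order $0,1,\dots,I-1$ on identifiers to organise its sweep, so that fairness carries the system into the correct absorbing configuration. The main point the full proof must check is precisely that this confirmation can stabilise to accept only when $m=0$ and must do so once $m=0$, i.e.\ that reconciling a reject that is retractable during the transient phase with a reject that becomes permanent once an unmatchable witness exists is done consistently; everything else is routine.
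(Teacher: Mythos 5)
Your matching phase coincides with the paper's: each agent at identifier $i+1$ greedily ``claims'' (the paper says ``deletes'') a still-free same-symbol agent at identifier $i$, the two monotone flags are exactly the paper's second and third components in $\{\III,\D\}$, agents at identifier $0$ are matched by convention, and the complete-bipartite observation justifying that the greedy matching saturates layer $i+1$ iff $|X_{i+1}|_s\le|X_i|_s$ is correct. The gap is in the part you yourself flag as the crux: producing a stabilising output. You assert that a retractable reject must oscillate forever, conclude that a leader-driven confirmation sweep is needed, and then defer precisely the property that sweep would have to satisfy (``stabilise to accept only when $m=0$ and must do so once $m=0$''). That is the whole difficulty, and it is not resolved. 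Worse, the leader route is shaky here: $I$ may be a small constant while arbitrarily many agents share each identifier, so the leader cannot enumerate or count the population, cannot tell when it has ``seen everyone'', and hence cannot detect quiescence of the matching --- which is exactly the termination-detection problem you are trying to delegate to it. (The counting/reset machinery that makes such leader sweeps work appears only later in the paper and requires $f(n)=\Omega(\log n)$.)

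The paper resolves this without a leader, by making the \emph{accept} token destructible and only regenerable by progress events, which are finite in number. Concretely, the fourth state component takes four values: $\F$ (still unmatched, hence a permanent reject witness), $\f$ (believes reject), $\T$ (a ``fresh'' accept source, created only when an agent completes its deletion), and $\ttt$ (believes accept). An $\F$ agent converts \emph{everything} it meets --- including $\T$ and $\ttt$ --- to $\f$; a $\T$ agent converts $\f$ to $\ttt$. Since deletions happen only finitely often, no new $\T$ is created after the matching stabilises; if an $\F$ survives it eventually extinguishes every $\T$ and the population locks into all-reject, while if no $\F$ survives the $\T$ left by the last deletion survives forever and locks the population into all-accept. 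This is the standard population-protocol idiom your sketch is missing, and supplying it (or a genuinely worked-out alternative) is necessary before the proof is complete.
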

\begin{proof}

The high-level description of the protocol is  described in order to check whether $X_{i+1}\subset X_i$ for every $i$.  Thus,  $X_{i+1}\subset X_i$ implies that each
agent in $X_{i+1}$ manages to "delete" an agent with the same input in
$X_i$.

In our protocol, each agent with an identifier $i>0$ looks for
an agent of identifier $i-1$ with the same input to "delete" it. 
More formally, consider:

\begin{itemize}
\item $B=\Sigma\times\{\III,\D\}\times\{\III,\D\}\times\{\T,\ttt,\F,\f\}$.
\item $d=0$.
\item $Y=\{True, False\}$.
\item $\forall s\in\Sigma,$ $\iota(s)=(s,\III,\III,\F)$.
\item $\forall s\in\Sigma$, $\forall a,b\in\{\III,\D\}^2$, $\omega(s,a,b,\T)=\omega(s,a,b,\ttt)=True$,
$\omega(s,a,b,\F)=\omega(s,a,b,\f)=False$.
\item $\delta$ is such that the non trivial rules are:
\end{itemize}
\begin{center}
\begin{tabular}{ r @{\hspace{0,2cm}} l @{$\rightarrow$} r @{\hspace{0,2cm}}  l l}
$(s,\III,b,\F)_0$ & $q$ & $(s,\D,b,\T)_0$ & $q$ & $\forall b,q$\\
$(s,a,\III,c)_i$ & $(s,\III,b,c')_{i+1}$ & $(s,a,\D,c)_i$ & $(s,\D,b,\T)_{i+1}$ & $\forall a,b,c,c'$\\
$(s,a,b,\F)_i$ & $(s',a',b',c')_j$ & $(s,a,b,\F)_i$ & $(s',a',b',\f)_j$ & $\forall a,b,a',b',c'$\\
$(s,a,b,\T)_i$ & $(s',a',b',\f)_j$ & $(s,a,b,\T)_i$ & $(s',a',b',\ttt)_j$ & $\forall a,b,a',b',c'$\\
\end{tabular}
\end{center}

The state of each agent is composed of four elements:
\begin{enumerate}
\item The first element corresponds to its input symbol.
\item The second is equal to $\III$ if the agent with $id$ identifier has not yet "deleted" an
agent with the same input and with $id-1$ identifier. It is equal to $\D$
if the deletion has already been performed.
\item The third element is equal to $\III$ if it has not been deleted yet by an agent
with the successor identifier. It is $\D$ as soon as the deletion has been performed.
\item The fourth element corresponds to the output. State $\F$ means that
the agent needs to perform a deletion. The agent knows that the input has to be $False$
as long as it has not deleted an agent with the previous identifier. State $\f$ means
the agent believes that at least one deletion needs to be performed.
State $\T$ means that the agent made its deletion and since did not meet agent needing
to perform a deletion. State $\ttt$ means that the agent believes that no deletions need to be done.
\end{enumerate}

Rule $1$ handles that agents with identifier $0$ does not need to do a deletion.
Rule 2 handles a deletion.

If we project on the fourth element, the stable configuration are
$\T^*\ttt^*$ and $\F^*\f^*$.

Rule 3 spreads the output $False$ to each agent. It can only come
from an agent still waiting for a deletion. If a deletion needs to be done
and cannot be, the output $False$ will be spread by fairness.

Rule 4 spreads the output $True$ from $\T$ to $\f$ agents.
If there are no longer deletions to do, there is at least one $\T$ in the population,
being from the agent that has performed the last deletion.
It will spread the output $True$ by fairness. \hfill \qed
\end{proof}



\section{A Polylogarithmic Number of Identifiers}\label{sec:3p}

In this section, the case where the population  contains  $f(n)\ge\log n$ distinct
identifiers is considered. First, a protocol is designed in order to compute   the size of the population. 
When the population reaches a stable configuration (i.e. agents will no longer be able
to fin another one that will change its state), the size 
will be encoded in binary  on $\log n$ agents. We will then show how to 
"read the input" and how to simulate a tape of length $\log n$. To perform that,
we will explain a process that ensures that the protocol will at some point
do exactly what is expected.

\subsection{Organization As a Chain}

The first step is to organize $f(n)$ agents in a chain: 
  We design a protocol that creates a chain containing all the existing identifiers such that an agent with identifier $Id_{k+1}$ is a successor of an agent with identifier $Id_{k}$.  This protocol consists of the execution of several leader Election protocols.
  The classical Leader Election protocol for population protocols distinguishes
one agent from all the others, usually by having all agents but one in state $N$,
the leader being in state $L$  by using the simple rule $L$ $L\to L$
$N$. Here, instead of having a unique agent different from all the
others, it distinguishes one agent for each distinct identifier, by 
the simple trick that two agents truly interact only if
they have the same identifier.
Here is the protocol:
\begin{itemize}
\item $B=\{L,N\}$.
\item $d=0$.
\item $\Sigma=L$, $Y=\{True\}$, $\iota(L)=L$ and $\omega(L)=\omega(N)=True$.
\item $\delta$ has only one difference from the usual Leader Election:
the first rule is split according to the identifiers.
\end{itemize}
\begin{center}
\begin{tabular}{ r @{\hspace{0,2cm}} l @{$\rightarrow$} r @{\hspace{0,2cm}}  l l}
$L_{id_a}$ & $L_{id_a}$ & $L_{id_a}$ & $N_{id_a}$ & \\
$L_{id_a}$ & $L_{id_b}$ & $L_{id_a}$ & $L_{id_b}$ & with $id_a\ne id_b$\\
\end{tabular}
\end{center}

At some point, there will be exactly one $L_{id}$ agent for each $id\le f(n)$.
We will denote the leader with identifier ${id}$ by $L_{id}$. Moreover, 
we focus on the leader with identifier $0$ denoted by  $L_0$ and called the $\motnouv{Leader}$.

For the remaining of this section, the \motnouv{Chain} will refer to these particular agents.
We will often see these agents as a tape of $f(n)$ symbols in $B$
sorted according to the identifiers.

\subsection{The Size of the Population}

In many classical population protocols, integers are encoded in unary.
Here, we will compute the size of the population 
%
%
on $\log n$ agents in binary.
The size of the population will be encoded in binary on a chain
(it will be possible as $f(n)\ge\log n$ in this part), using the
concept of chain of previous section. 


\begin{proposition}\label{compte}
When the population has $f(n)\ge\log n$ distinct identifiers,
there exists an homonym population protocol that computes the size $n$
of the population: 
At some point,  all agents are in a particular state $N$ except   $f(n)$ agents having distinct identifiers.
If we align these agents from the highest identifier
to the lowest one,  then they encode the size of the population $n$ written  in binary.
\end{proposition}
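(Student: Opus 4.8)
The plan is to turn the \emph{Chain} of leaders $L_0,L_1,\dots,L_{f(n)-1}$ built in the previous subsection into a binary register that holds a running count, with $L_0$ (recognised through the test ``$id=0$'') carrying the least significant bit and $L_{f(n)-1}$ the most significant one; two consecutive cells of the register are recognised through the ``consecutive identifiers'' test, so a carry always knows how to move from cell $i$ to cell $i+1$. As $f(n)\ge\log n$, the register has enough cells to store $n$ in binary. Each agent additionally carries a Boolean flag \emph{counted}/\emph{uncounted}, and the protocol is the superposition of the chain (leader election) protocol with a counting protocol whose sole task is to add $1$ to the register exactly once for each of the $n$ agents. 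Once this is achieved on a stabilised chain the register equals $n$, which is precisely the claimed final configuration.

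I would realise ``add $1$'' by absorption at $L_0$ followed by carry propagation. The leader $L_0$ first counts itself (at its first interaction it flips its own bit and sets its flag to \emph{counted}); afterwards every \emph{uncounted} agent $a$, whatever its identifier --- in particular every other leader $L_i$ --- becomes \emph{counted} when it meets $L_0$, which increments the least significant bit. An increment flips $b_0$ and, if $b_0$ was $1$, emits a carry, that is a ``$+1$ at cell $1$'' token. A cell $i$ carrying such a token, when it meets its successor $L_{i+1}$, flips $b_{i+1}$ and either absorbs the carry (if $b_{i+1}$ was $0$) or forwards it (if $b_{i+1}$ was $1$); since $f(n)\ge\log n$ the carry is always absorbed before overflowing the top cell. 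To keep the state of every agent finite I would serialise increments with a handshake: $L_0$ refuses a new increment while a carry it has emitted is still travelling and re-enables itself only when that carry reports back that it has been absorbed, so that at most one carry is ever in flight and no unbounded pile of carries can build up at a cell. Under fairness every uncounted agent eventually meets $L_0$ and every pending carry eventually meets its successor, hence all $n$ increments and all carries are eventually performed.

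Third, I would make the count robust against the leader election, which is still running. Initially every agent is a leader (the chain protocol starts from $\iota(L)=L$), and duplicates are demoted one at a time, so exactly $n-f(n)$ demotions ever occur; but a demotion may destroy a cell that was holding a bit and thereby corrupt the register. To repair this, each demotion (two leaders with the same identifier meeting) triggers a \emph{reset}: the register is cleared to $0$ and every agent is set back to \emph{uncounted}, with counting suspended until the reset signal, spread by fairness, has reached the whole population. Because only finitely many demotions occur, only finitely many resets are triggered; after the last one the Chain is permanently stable, with one leader per identifier, and the counting protocol then runs undisturbed on a fixed register.

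Correctness and stability then follow by combining these facts. After the last reset the system behaves like a fresh count on a stable chain: by fairness the register reaches $n$ in binary, and once it does no non-trivial rule applies --- there is no uncounted agent, no pending carry, no duplicate leader and hence no further demotion --- so the configuration is a fixpoint, which by fairness is reached and never left. I expect the main obstacle to be exactly this convergence argument rather than the design of the rules: one has to show that the asynchronous interplay of concurrent increments, carry propagation and the concurrent leader election can neither deadlock nor loop. This rests on (i) the handshake keeping the state space finite while the binary increment stays faithful, and (ii) the finiteness of the number of resets, which guarantees a last clean restart from which, by fairness, the count provably terminates at $n$.
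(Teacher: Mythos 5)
Your register-plus-carry design is close in spirit to the paper's protocol, but the way you handle the interaction with the still-running leader election contains a genuine gap. You propose that each demotion of a duplicate leader triggers a \emph{reset} (clear the register, mark everyone uncounted), ``with counting suspended until the reset signal, spread by fairness, has reached the whole population.'' Detecting that a broadcast has reached the whole population is exactly the termination-detection problem that requires knowing $n$ --- the very quantity being computed (the paper needs a full separate Reset Protocol, Proposition~\ref{prop:reset}, for this, and that protocol itself \emph{uses} the counting protocol as a subroutine). Without completion detection, resets and increments interleave: an agent counted after the register is cleared but before its own flag is cleared gets counted twice once the stale reset reaches it, while an agent whose \emph{counted} flag survives from a previous epoch is never counted again; and since the number of demotions ($n-f(n)$) is unbounded, finite-state agents cannot tag reset signals with epoch numbers to tell a stale reset from the final one. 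So ``after the last reset the system behaves like a fresh count'' is not something your rules actually guarantee.

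The paper avoids resets entirely by making \emph{every} rule, including the leader-merging ones, conservative. It assigns a value $v$ to each state ($v(A_k)=1$, $v(0_k)=v(N_k)=0$, $v(1_k)=2^k$, $v(2_k)=2^{k+1}$) and designs all transitions so that $\sum v$ over the population is invariant, hence always equal to $n$. When two duplicate register cells with identifier $k$ meet, their values are \emph{merged} (e.g.\ $1_k\;1_k\to N_k\;2_k$, the state $2_k$ acting as a locally stored carry that later migrates via $0_{k+1}\;2_k\to 1_{k+1}\;0_k$) instead of one of them being wiped; and an uncounted agent deposits its unit directly at an identifier-$0$ cell ($A_k\;0_0\to 0_k\;1_0$) with no serializing handshake, since several pending carries can coexist harmlessly. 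Correctness then follows from the invariant plus fairness (all $A$'s and all $2_k$'s eventually disappear), with no convergence argument about resets needed. If you replace your reset by such a value-preserving merge of demoted cells, your construction essentially becomes the paper's.
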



\begin{proof}
  In a high-level description, the protocol initializes all agents to a particular
  state $A$. This protocol will implicitly include the chain construction
(instead of being in state $L$, potential leaders will have their state
in $\{A,0,1,2\}$).
This protocol  counts the number of agents in state $A$. An agent in state
  $1$ (respectively 0, or 2) with identifier $k$ represents $2^k$
  (respectively 0, or $2^{k+1}$) agents counted. Interactions between
  agents are then built to update those counts. 

More formally, we have $B_{c}=\{A,0,1,2,N\}$ . The rules $\delta_{c}$ are the following:
\begin{center}
\begin{tabular}{ l @{\hspace{0,2cm}} l @{$\rightarrow$} l
    @{\hspace{0,2cm}} l @{\hspace{1cm}} l @{\hspace{2cm}}  l @{\hspace{0,2cm}} l
    @{\hspace{0,2cm}}l @{\hspace{0,2cm}} l @{\hspace{0,2cm}} l @{\hspace{1cm}} l }
$A_0$ & $q_k$ & $1_0$ & $q_k$ & $\forall q, k$  & $0_k$ & $1_k$ & $\rightarrow$ &$N_k$ & $1_k$ & $\forall k$\\
$A_k$ & $0_0$ & $0_k$ & $1_0$ & $\forall k\ge1$ & $1_k$ & $1_k$ & $\rightarrow$&$N_k$ & $2_k$ & $\forall k$\\
$A_k$ & $1_0$ & $0_k$ & $2_0$ & $\forall k\ge1$ & $0_k$ & $0_k$ & $\rightarrow$&$N_k$ & $0_k$ & $\forall k$ \\
$0_{k+1}$ & $2_k$ & $1_{k+1}$ & $0_k$ & $\forall k$ &       &            &  &            &            & \\
$1_{k+1}$ & $2_k$ &  $2_{k+1}$ & $0_k$ & $\forall k$&       &            &   &          &            &\\
\end{tabular}
\end{center}

This protocol is split into 3 steps. (i) At the beginning, all agents are in state $A$.  A state $A$ is
  transformed  into a state $1$, by adding $1$ to an agent of
  identifier $0$ corresponding to the $3$ first rules of the left column. (ii) The remaining rules of the left column correspond to summing together the counted agents, carrying on to the next identifier the $1$.
 (iii) Rules of the right column are here
to perform in parallel the chain protocol.
 
Let $v$ be the function over the states defined as follows for any $k$:
$v(A_k)=1$, $v(0_k)=v(N_k)=0$, $v(1_k)=2^k$, $v(2_k)=2^{k+1}$.  We
can notice that the sum of $v$ values over all the agents remains constant over the
rules. Thus the sum always equals the number of agents in the
population.

By fairness, all the agents in state $A$ will disappear, the chain will finish, and the agent in state $2_k$ will disappear.
Hence, the protocol writes the size of the population on the chain in binary. \hfill \qed
\end{proof}

\begin{remark}
The  previous counting protocol also works with $f(n)=\Omega(\log n)$.
Indeed, if for some $\alpha<1$ we have $f(n)\ge\alpha\log n$,
then, using a base $\lceil e^{1/\alpha}\rceil$ instead of a base 2 allows that $n$ can
be written on $f(n)$ digits.
\end{remark}

\begin{remark}   The  previous counting protocol  works if the population can detect that  an identifier is equal to $0$. 
This protocol can be adapted to a population 
with identifiers in $[a,a+f(n)-1]$. For this, agents store an identifier $Id_{m}$
corresponding to the minimal one they met (called here $Id_m$). An agent with identifier $Id$ and state $i\in\{0,1,2\}$
stores $i\cdot 2^{Id-Id_m}$. When it meets an identifier equals to $Id_m-1$,
it looks for a leader with identifier $Id-1$ to give it its stored integer.
\end{remark}

Once 
a chain is constructed, as above, it can be used to store numbers or
words. Thus it can be used as the tape of a Turing Machine. We will
often implicitly use this trick within the rest of the paper.  


\subsection{Resetting a Computation}

The computation of the size $n$ (encoded as above) is crucial for the following protocols.  
We call \emph{leader}  an agent not in state $N$ with identifier 0
from the previous protocol (or the chain protocol).

We will now provide a  \emph{Reset  protocol}. This protocol has the goal to reach a configuration
such that (i) the Counting protocol is finished, (ii) all agents except the leader are in state $R$, and (iii)
the leader {\it knows} when this configuration is reached (i.e. the leader is
on a state $\RL$ making him the belief the reset is done. The last time 
the protocol turn the leader's state into $\RL$, the reset will be done).

This protocol will then permit to launch the computation of some other protocols
with the guarantee that, by fairness, at some point, all agents were counted, and hence covered.

\begin{define}
A \motnouv{Reset Protocol} is a homonym protocol that guarantees to reach a configuration
where:
\begin{itemize}
\item The size of the population has been computed. In particular, the population has
a unique leader at this point.
\item There exists a mapping function $map:Q \to\{R,\RL\}$ such
  that leader's state is mapped to $\RL$ and 
all other agents' states are mapped to $R$. 
\end{itemize}

\end{define}
This configuration will be the beginning of the next computation step. All
agents will be ready, at this point, to start a next computation.
\begin{proposition}\label{prop:reset}
There exists a  Reset  Protocol.
\end{proposition}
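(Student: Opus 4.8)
The plan is to build the reset on top of the counting protocol of Proposition~\ref{compte}, exploiting the fact that once the population size $n$ is available in binary on the chain the leader can count how many agents it has already reset and compare this count with $n$; the equality of the two is the detectable event signalling that every agent has been covered. This is precisely the place where knowing $n$ lets us circumvent the usual impossibility of detecting termination in a population protocol, since fairness alone only guarantees that every agent is \emph{eventually} met, never that coverage is complete in a way the leader can observe.

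Concretely, I would equip every non-leader agent with one extra marking bit, \emph{fresh} versus \emph{swept}, and have the leader maintain a second binary counter $c$ stored on the chain alongside the size $n$ (both fit in $O(\log n)$ chain cells since $c\le n$). Initially all agents are \emph{fresh} and $c=0$. The essential rule is: when the leader meets a \emph{fresh} non-leader, it marks that agent \emph{swept}, arranges that $map$ sends its state to $R$, and increments $c$ by one; meeting an already \emph{swept} agent leaves $c$ unchanged. The marking bit guarantees each agent contributes to $c$ at most once, so $c$ faithfully counts the agents the leader has reset. A bit-by-bit comparison of $c$ and $n$ on the chain lets the leader test $c=n-1$ (all agents but itself swept); when this holds it sets a flag so that $map$ sends its own state to $\RL$, declaring the reset complete. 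Note that the chain agents are themselves swept and mapped to $R$, while their underlying states retain the count bits, so $n$ is preserved.

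The subtlety, and the reason the definition speaks of the \emph{last} time the leader's state is mapped to $\RL$, is that the counting of Proposition~\ref{compte} and the reset sweep run concurrently: while $n$ is still being computed the leader may prematurely find $c=n-1$ for the current, not-yet-final value of $n$ and wrongly enter $\RL$. I would handle this by making any change to the stored size $n$ invalidate the sweep, resetting $c$ to $0$ and flipping every \emph{swept} marking bit back to \emph{fresh} (without erasing $n$), thereby forcing a new sweep. By fairness the counting protocol eventually reaches a configuration from which the stored $n$ no longer changes; after that point no further invalidation is triggered, and a complete sweep runs to its end, yielding $c=n-1$, every non-leader genuinely in a state mapped to $R$, and the leader mapped to $\RL$. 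This final passage to $\RL$ is stable, and at that moment the configuration satisfies both clauses of the definition.

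The main obstacle is exactly this concurrency argument: one must verify that restarting the sweep on every count change does not block progress --- it does not, because each restart is triggered only by a counting step, and after the stored size stabilises no counting step changes $n$ --- and, conversely, that once $n$ has stabilised no rule can spuriously re-invalidate the sweep, so that the leader does reach $\RL$ and remains there. Checking that $c$ and $n$ coexist within $O(\log n)$ chain cells and that increment, comparison, and invalidation can be realised as local rules on consecutive identifiers is routine given the chain machinery already developed.
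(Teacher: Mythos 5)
Your overall architecture --- a second binary counter on the chain, a sweep that marks agents and increments the counter, a bit-by-bit comparison with the stored size, invalidation whenever the stored size changes, and the convention that only the \emph{last} entry into $\RL$ counts --- is the same as the paper's. However, there is a genuine gap in your invalidation step. You write that a change of the stored size ``reset[s] $c$ to $0$ and flip[s] every swept marking bit back to fresh'' as if this were one action. In this model a transition affects only the two interacting agents, so un-marking the whole population is itself a distributed sweep whose completion the leader cannot detect --- which is exactly the termination-detection problem the reset protocol is meant to solve. If the leader starts its new sweep while stale swept marks survive, those agents are skipped forever and $c$ never reaches $n-1$ (no progress); if instead the un-marking runs concurrently with the new sweep, an agent can be counted, un-marked, and counted again, so $c$ can reach $n-1$ while some other agent was never reset, and the leader's final passage to $\RL$ violates the specification (a safety failure). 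A smaller omission of the same flavour: the counting interactions that modify $n$ generally do not involve the leader, so you also need a notification mechanism (the paper uses a dedicated state $W$ that carries the warning to the leader).

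The paper closes the main gap with an abort-and-retry scheme rather than an atomic invalidation: the leader optimistically re-freshens agents (turning them into a state $\SSS$) for an arbitrary, scheduler-determined amount of time, then clears the second counter and begins sweeping; whenever, during the sweep, it meets an agent that is \emph{not} in state $\SSS$, it aborts and returns to the freshening phase. Fairness then guarantees that some attempt consists of a complete freshening pass followed by a clean sweep, and only such an attempt can drive the counter up to the stored size and put the leader into $\RL$. To repair your proof you need either this mechanism or an explicit substitute (for instance, having the leader remember exactly how many agents it has marked and un-mark precisely that many before restarting), together with the argument that after the counting protocol stabilises one such clean attempt is reachable from every recurrent configuration and hence occurs.
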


\begin{proof}
 The high-level description  of the protocol relies on  starting back the reset protocol each time the leader sees that
  the counting protocol (of the previous proposition) has not finished yet.

The leader turns agents in state $R$. The chain lets the leader count the number of agents it turns into state $R$.
If it has turned the same number of agents that the number computed by the
Counting protocol, then it turns its state into $\RL$.
The protocol handles two counters in parallel. The first counter corresponds to the size of the population and the second counter
counts the number of agents that have been reset. We can notice that the leaders for each identifier will be the same on the first and second element,
and hence the chains will be the same for the two counters.

This mechanism is not as simple as it could look: the protocol uses the fairness
to be sure that at some point, it will have turned the right number of agents into state $R$.\\

More formally, the set of states is $$B_c\times B_c\times\{\A,\B,\C,D,D^p,D'^p,E,\RL,R,\SSS,W\},$$ where $B_c$ is the set
of states of the counting protocol (and $\delta_c$ its interaction function). The first element of the triplet
is for the counting protocol, the second for the counting of agents turned into
state $R$, and the third part is for the reset itself. At the beginning, all agents are in state $(A,0,\A)$. 
The second counting protocol will work a bit differently for the carry over.
The leader performs it by itself, walking through the chain, contrarily to the process
described in Proposition~\ref{compte}.

This protocol uses two identifiers slots. The first will be attached to the first counter:
the leader stores the greater identifier it heard about (each time it updates it,
we consider it has updated its state). The second is attached to the second counter.

Sometimes, rules will be of the form:
\begin{center}
\begin{tabular}{ r @{\hspace{0,2cm}} l @{$\rightarrow$} r @{\hspace{0,2cm}} l l }
$(q_1,q_2,q_3)$ & $(q_4,q_5,q_6)$ & $(q'_1,q'_2,q'_3)$ & $(q'_4,q'_5,q'_6)$. 
\end{tabular}
\end{center}
We will implicitly assume that $\delta_c(q_1,q_4)=(q'_1,q'_4)$.
If we write $q_1$ (resp $q_4$) instead of $q'_1$ (resp $q'_4$), then
it implies that $q'_1=q_1$ (resp $q'_4=q_4$).
We will note $q_{1,id}$ if the identifier $id$ attached to $q_1$ is relevant
(same for $q_2$).

We will describe $\delta$ according to the steps of the process of resetting:
\begin{enumerate}
\item First, the leader needs to know when the counting protocol evolved.
For this, as soon as an interaction occurred, agents not being a leader go into state $W$
to warn the leader to restart the reset. The leader $L_0$ then goes
into state $\A$ on its third element. (We can notice that in $\delta_c$,
there is a non trivial interaction with a leader if and only if the leader is the second element.
We can also notice that an agent with identifier $0$ in state $q_1\ne N$ is a leader.)
\end{enumerate}
\begin{center}
\begin{tabular}{ r @{\hspace{0,2cm}} l @{$\rightarrow$} r @{\hspace{0,2cm}} l l }
$(q_1,q_2,q_3)$ & $(q_4,q_5,q_6)$ & $(q'_1,q_2,W)$ & $(q'_4,q_5,q_6)$ & with $q'_1\ne q_1$ or $q'_4\ne q_4$\\
$(q_1,q_2,q_3)_0$ & $(q_4,q_5,W)$ & $(q'_1,q_2,\A)_0$ & $(q'_4,q_5,\SSS)$ & with $ q_1\ne N$\\
\end{tabular}
\end{center}

\begin{enumerate}
\item[2.] In  state $\A$, the leader knows that the counting protocol is not finished yet. 
It turns all agents into state $\SSS$. At some point (when the leader is the second
element of an interaction), it stops and goes to state $\B$.

The idea is that by fairness, if we repeat again and again this process,
at some point, the leader will manage to have all other agents turned into state $\SSS$.
\end{enumerate}
\begin{center}
\begin{tabular}{ r @{\hspace{0,2cm}} l @{$\rightarrow$} r @{\hspace{0,2cm}} l l }
$(q_1,q_2,\A)_0$ & $(q_4,q_5,q_6)$ & $(q_1,q_2,\A)_0$ & $(q_4,q_5,\SSS)$ & with $q_1\ne N$\\
$(q_1,q_2,q_3)$ & $(q_4,q_5,\A)_0$ & $(q_1,q_2,\SSS)$ & $(q_4,1_0,\B)_0$ & with $q_4\ne N$\\
\end{tabular}
\end{center}

\begin{enumerate}
\item[3.] In state $\B$, the leader clears the second chain
  corresponding to the swith econd counter of the reset agents.
This way, after the last change from the counting protocol,
we are sure that the chain will be cleared and will effectively
count all the agents.
The identifier attached to the first state gives to the leader the highest
identifier it saw. This way, it knows the last element's identifier in the chain.

The leader just keeps its own bit at $1$, as it needs to
count itself. After that, the leader goes to state $\C$.
\end{enumerate}
\begin{center}
\begin{tabular}{ r @{\hspace{0,2cm}} l @{$\rightarrow$} r @{\hspace{0,2cm}} l l }
$(q_{1_i},1_j,\B)_0$ & $(q_4,q_5,q_6)_{j+1}$ & $(q_1,1_{j+1},\B)_0$ &
                                                                      $(q_4,0,q_6)_{j+1}$
                                                                             &
                                                                               with $j<i$-1 \\  and $q_5\ne N$\\
$(q_{1_i+1},1_i,\B)_0$ & $(q_4,q_5,q_6)_{i+1}$ & $(q_1,1_\_,\C)_0$ & $(q_4,0,q_6)_{i+1}$ & with $q_5\ne N$\\
\end{tabular}
\end{center}

\begin{enumerate}
\item[4] In state $C$, the leader looks only for $S$ agents. When it finds one
it turns it into a $R$ and adds $1$ to the second counter (by going to state $D$).
If it meets another state, it goes back to state $A$, in order to try to
turn again all agents into $S$ and to reset the counter.
\end{enumerate}
\begin{center}
\begin{tabular}{ r @{\hspace{0,2cm}} l @{$\rightarrow$} r @{\hspace{0,2cm}} l l }
$(q_1,q_2,\C)_0$ & $(q_4,q_5,\SSS)$ & $(q_1,q_2,D)_0$ & $(q_4,q_5,R)$ &\\
$(q_1,q_2,\C)_0$ & $(q_4,q_5,q_6)$ & $(q_1,q_2,\A)_0$ & $(q_4,q_5,\SSS)$ & with $q_6\ne S$\\
\end{tabular}
\end{center}

\begin{enumerate}
\item[5] In state $D$, the leader increases the second counter by $1$.
If it has a carry, it goes in state $D^p$ as long at it is needed.
When the incrementation is over it goes back to state $D$ until it
reaches the end of the chain or find a difference.
If it finds a difference, and if it still has to propagate the carry, then it goes in state $D'^p$.

Here, we see why the agents need another slot of identifier: the leader needs to
remember what was the last bit he saw
(to identify easily the next one to find). The identifier on the first state
allows to know until which identifier
it has to compare the two counters.

If it reaches the last bit of the chain and the two counters are equal,
then the leader believes the reseting is over and goes to state $\RL$ (until
the counting protocol on the first element gets an update, if it happens).
If the counter is not yet equal, then it looks for another $\SSS$ to turn into a $R$.
\end{enumerate}
\begin{center}
{\small
\begin{tabular}{ r @{\hspace{0,1cm}} l @{$\rightarrow$} r @{\hspace{0,1cm}} l l }
$(0_i,0_\_,D)_0$ & $(q_4,q_5,q_6)$ & $(1_i,0_\_,\C)_0$ & $(q_4,q_5,q_6)$ &\\
$(1_i,0_\_,D)_0$ & $(q_4,q_5,q_6)$ & $(1_i,1_0,D)_0$ & $(q_4,q_5,q_6)$ &\\
$(1_i,1_\_,D)_0$ & $(q_4,q_5,q_6)$ & $(1_i,0_0,D'^p)_0$ & $(q_4,q_5,q_6)$ &\\
$(0_i,1_\_,D)_0$ & $(q_4,q_5,q_6)$ & $(0_i,0_0,D^p)_0$ & $(q_4,q_5,q_6)$ &\\
$(q_{1,i},q_{2,j},D)_0$ & $(a,a,q_6)_{j+1}$ &
                                              $(q_{1,i},q_{2,j+1},D)_0$
                                                       &
                                                         $(a,a,q_6)_{j+1}$
                                                                         &  \begin{minipage}{2.5cm}
                                                                           with
                                                                           $a\in\{0,1\}$
                                                                           and
                                                                           $j<i$-1
                                                                           \end{minipage}\\
$(q_{1,i},q_{2,j},D)_0$ & $(a,$1-$a,q_6)_{j+1}$ & $(q_{1,i},q_{2,\_},\C)_0$ & $(a,$1-$a,q_6)_{j+1}$ & with $a\in\{0,1\}$\\
$(q_{1,i+1},q_{2,i},D)_0$ & $(a,a,q_6)_{i+1}$ & $(q_{1,i+1},q_{2,\_},\RL)_0$ & $(a,a,q_6)_{i+1}$ & with $a\in\{0,1\}$\\
$(q_{1,i},q_{2,j},D'^p)_0$ & $(q_4,0,q_6)_{j+1}$ & $(q_{1,i},q_{2,\_},\C)_0$ & $(q_4,1,q_6)_{j+1}$ & \\
$(q_{1,i},q_{2,j},D'^p)_0$ & $(q_4,1,q_6)_{j+1}$ & $(q_{1,i},q_{2,j+1},D'^p)_0$ & $(q_4,0,q_6)_{j+1}$ & \\
$(q_{1,i},q_{2,j},D^p)_0$ & $(0,0,q_6)_{j+1}$ & $(q_{1,i},q_{2,\_},\C)_0$ & $(0,1,q_6)_{j+1}$ & \\
$(q_{1,i},q_{2,j},D^p)_0$ & $(1,0,q_6)_{j+1}$ & $(q_{1,i},q_{2,j+1},D)_0$ & $(1,1,q_6)_{j+1}$ & with $j<i$-1\\ 
$(q_{1,i},q_{2,j},D^p)_0$ & $(0,1,q_6)_{j+1}$ & $(q_{1,i},q_{2,j+1},D^p)_0$ & $(0,0,q_6)_{j+1}$ & \\
$(q_{1,i},q_{2,j},D^p)_0$ & $(1,1,q_6)_{j+1}$ & $(q_{1,i},q_{2,j+1},D'^p)_0$ & $(1,0,q_6)_{j+1}$ & \\ 
$(q_{1,i+1},q_{2,i},D^p)_0$ & $(1,0,q_6)_{i+1}$ & $(q_{1,i+1},q_{2,\_},\RL)_0$ & $(1,1,q_6)_{j+1}$ &\\
\end{tabular}
}
\end{center}

To prove that this protocol will succeed, we know by fairness that the
counting protocol will finish at some point. There will be several agents in state $W$.
By fairness, the leader will have seen all of them at some point.

Let consider a configuration appearing infinitely often.
We will show that the reset point can be reached from it
(and then will be reached by fairness).
\begin{itemize}
\item If the leader has its third state equal to $\A$, then we have it met all the agents to turn them
into $\SSS$. Then, we go into state $B$ in a
population where all other agents are in state $\SSS$.
\item If the leader is in state $\B$, it can only do an interaction one
after another: clearing the chain by following the leaders one identifier
after another.
We then
reach $\C$, keeping the number of $\SSS$ in the population.
\item If the leader is in state $\C$,
we have two cases:
\begin{itemize}
\item All agents are in state $\SSS$. It can be the case only if the leader's last
operation was becoming a $\C$ from a $\B$,
hence the second counter is equal to $1$. Then, if the leader repeats
the actions (turn an agent from $\SSS$ to $R$ + increment the counter),
at some point the two counters will match, and the leader will reach the state $\RL$.
\item At least one agent is not in state $\SSS$. We have the leader interacts with
it to go in state $\A$.
\end{itemize}
\item If the leader is in state $\{D,D^p,D'^p\}$, we can finish Step 5. The
leader will then be in state $\C$ or $\RL$.
\item If the leader is in state $\RL$, then the two counters must be equal.
Hence, the leader turned exactly the right number of agents from state $\SSS$ to $R$,
the reset is performed.
\end{itemize}
The population will reach at some point the desired configuration. \hfill \qed
 \end{proof}

To run the computations of the following protocols, we first reach the end
of this protocol. More precisely, the leader will start the next steps after
having reached the state $\RL$. If at some point the leader replaces
the $\RL$ with a $\A$ using the reset protocol, then it knows it has to restart
the next steps. While the leader will turn into $\SSS$ all the other agents,
it will also reset the computations elements from the next steps.

The last time the leader will turn into $\RL$ and  will know that all
agents are ready to start the next steps. It also knows that the size of the population
is encoded correctly on the chain.

%

This protocol ensures that the leader has access to all the input
at some point. With population protocols, the leader will never be able to know
if it has turned every agent into some state $\SSS$, because it has not the
ability to know the size of the population.


\subsection{Access to the Input} 

We now
introduce a protocol that computes the number of agents
that had the input $s\in\Sigma$ and were given the identifier $Id$.

This is a sub-protocol that can be used by the
main protocol at any moment. Hence, the information cannot be definitively kept 
and has to be precisely given  at the requested moment.
We cannot accept any error as it could not be detected on time to correct the computation.


\begin{proposition}\label{prop:sid}
If we have $f(n)=\Omega(\log n)$ identifiers and 
if the reset protocol has finished,
for all input $s\in \Sigma$ and for all $Id\le f(n)$,
there exists a protocol that writes on the chain the number of agents initialized
as $s_{Id}$.  
\end{proposition}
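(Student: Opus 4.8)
The plan is to reuse the binary counter on the Chain (from Proposition~\ref{compte}) together with the coverage mechanism of the Reset protocol (Proposition~\ref{prop:reset}). The crucial observation is that although each agent's reset-component has been set to $R$, its original input $s$ and its identifier $Id$ are still recoverable: the identifier never changes, and the input symbol is kept in a dedicated, never-overwritten field of the agent's state. Thus an agent \emph{qualifies} precisely when its identifier equals $Id$ and its stored input equals $s$, and this predicate can be tested locally on any interaction with the leader. The hypothesis $f(n)=\Omega(\log n)$ is exactly what guarantees that the Chain is long enough to hold a number up to $n$ in binary, so the counter we maintain fits.

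First I would have the leader initialize a fresh binary counter along the Chain, setting every chain bit to $0$; this counter will eventually hold the answer. Then the leader performs a sweep exactly as in the Reset protocol: it turns the other agents into a marked state, one after another, while maintaining a second counter of the number of agents it has marked and comparing it against the size $n$ already written on the Chain. Each time the leader marks a fresh agent, it inspects whether that agent qualifies as $s_{Id}$; if so, it increments the answer counter by one, propagating the carry along the Chain just as in Proposition~\ref{compte}. When the marking counter reaches $n$, the leader knows --- because it has matched the stored population size --- that it has processed every agent exactly once, so the answer counter records exactly the number of qualifying agents.

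The correctness argument is then identical in spirit to that of the Reset protocol. By fairness the leader eventually meets every agent, and the comparison against $n$ lets it detect when full coverage has been achieved; this ``compare against $n$'' mechanism simultaneously rules out double counting (an already-marked agent is never recounted) and undercounting (the sweep is not declared complete until all $n$ agents are marked). As in Proposition~\ref{prop:reset}, if the underlying reset ever restarts --- because some earlier stage had not truly stabilized --- the whole sub-protocol restarts, clearing both the answer counter and the marks; by fairness the last run is the correct one, so the value eventually written on the Chain stabilizes to the true count of $s_{Id}$ agents.

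The main obstacle I anticipate is the ``no error at the requested moment'' requirement: since the outer protocol may read the Chain at any time, I must guarantee that a wrong intermediate value is never mistaken for the final answer. I would handle this exactly as the Reset protocol handles its own termination detection --- the answer is regarded as valid only once the leader has reached the coverage condition (marking counter equal to $n$) after the final, non-restarted sweep. Proving that this instant coincides with genuine completion, and that every restart cleanly erases the partial answer counter so that no stale carry survives on the Chain, is where the bulk of the careful bookkeeping (analogous to the closing case analysis of Proposition~\ref{prop:reset}) will be needed.
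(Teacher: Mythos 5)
Your proposal is correct and follows essentially the same route as the paper's proof: a per-agent mark to prevent recounting, an answer counter incremented on qualifying agents, and a second ``recount'' counter compared against the population size already on the Chain to detect full coverage, with restarts of the sub-protocol whenever the underlying reset restarts. The paper phrases this as a $4$-tuple of per-agent fields (counting protocol, answer counter, recount counter, $Y/N$ mark), but the mechanism and the fairness-based correctness argument are the same as yours.
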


\begin{proof}
We assume that the population is already reseted to state $R$ using a Reset Protocol.

We do not give here the formal description of protocol, only a description of its process:
\begin{enumerate}
\item[0.] The agents will have a state corresponding to a $4$-tuples:
  \begin{itemize}
  \item the first element of the $4$-tuple is assumed to implement the
    counting protocol of previous subsection. We assume that the reset
    protocol has finished, so the counting protocol on this
first element is over.
\item the second element of the $4$-tuple will be used to count the agents
  with input $s_{Id}$.
\item the third element of the $4$-tuple will be used to implement another counting protocol, similar to the one of previous section. This will be used to recount the population to check that
    every agent has met the leader since the beginning of this process
    (by checking that the value encoded by first elements is equal to
    the value encoded by third elements).
\item the last fourth element of the $4$-tuple is
    here to determine whether the agent has already been  counted by the leader
    yet (for the counters corresponding to the second and third
    element of the 4-tuple).  It  is in
    $\{Y,N\}$ and is equal to $Y$ if and only if the leader already
    counted it.
  \end{itemize}

\item The leader looks for an agent it has not recounted again (i.e. with its 4th state equals to $N$).
When it meets one, it switches this agent's internal state from $N$ to $Y$, and 
it looks if its input was $s_{Id}$ or not. If it is, then it increments the second and the third counter,
otherwise it only increments  the third.
\item The leader then looks if the first and the third counter are equal. If not, it goes back to 
step $1$, if yes the computation is over.
\end{enumerate}

Since the counting protocol is over (if not, the population will be reseted again and again
until the counting is over), the size is known. With that, we are sure to have counted each agent
started in state $s_{Id}$, as the leader must have seen each agent in this protocol before
finishing it.\hfill \qed

%
%
 \end{proof}

\begin{remark}
  In other words, if at some moment, the population needs to know
the number of agents which started in the state $s_{Id}$, then this is
possible.
\end{remark}

\subsection{Turing Machine Simulation}\label{sec:def}

With all these ingredients we will now be able to access to the input
easily. We will also use the chain to simulate a tape of a
Turing Machine. 

The result obtained in this section  is a weaker bound than the one we will obtain latter. The principle of this proof helps to understand the stronger result.

\begin{proposition}
Any language in $MNSPACE(\log n, \log n)$ can be recognized by an homonym population
protocol with $\log n$ identifiers.
\end{proposition}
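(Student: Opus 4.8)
The plan is to use the chain from Section~4.1 as the work tape of the nondeterministic Turing machine $M$ witnessing membership in $MNSPACE(\log n,\log n)$, and to feed $M$ its input through the counting sub-protocol of Proposition~\ref{prop:sid}. First I would observe that, since $L$ is $(f,n)$-symmetric with $f(n)=\log n$, the machine $M$ accepts an input $w=x_0\#x_1\#\cdots\#x_{f(n)-1}$ if and only if it accepts the canonical representative $\hat w=\hat x_0\#\cdots\#\hat x_{f(n)-1}$ in which each block $\hat x_{Id}$ lists its symbols in a fixed order of $\Sigma$, each symbol $s$ occurring $|X_{Id}|_s$ times. Hence the whole input is determined, up to an answer-preserving rearrangement, by the integers $|X_{Id}|_s$, which are exactly the quantities the protocol can materialize on the chain by Proposition~\ref{prop:sid}.

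Then I would set up the simulation. After running the Reset Protocol of Proposition~\ref{prop:reset}, the leader knows $n$ in binary on the chain, has a unique identity, and knows when it may start. Because $M$ runs in space $O(\log n)$, its work tape fits on a chain of $\log n$ agents (each holding a constant-size cell), and its work-head position needs only $O(\log\log n)$ bits. The input-head position ranges over $[0,n]$, so it is an $O(\log n)$-bit register, again storable on a chain; the same holds for an explicit step counter bounded by $n^{O(1)}$, added so that every branch of $M$ halts in a finite number of steps. A constant number of such $O(\log n)$-bit registers can be packed by widening the per-agent state by a constant factor. To answer a read at input position $p$, the leader scans $Id=0,1,2,\dots$ keeping a running total of block sizes $|X_{Id}|=\sum_{s\in\Sigma}|X_{Id}|_s$ until the total first exceeds $p$; within that block it scans the symbols of $\Sigma$ in the fixed order, maintaining a running total, to locate the symbol occupying offset $p$. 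All the arithmetic here is comparison, addition and subtraction of $O(\log n)$-bit numbers, performed by the leader on the chain, each individual count being produced on demand by Proposition~\ref{prop:sid} and then discarded, so that space stays $O(\log n)$.

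Finally I would handle nondeterminism and stabilization through fairness. The leader drives one full simulation of $M$, resolving each nondeterministic choice by a guessed bit, until $M$ halts or the step counter expires; the default output carried by every agent is \emph{reject}, and an accepting halt makes the leader latch an absorbing \emph{accept} flag which it then broadcasts. After a halt the leader re-invokes the Reset Protocol and restarts, so that by fairness every sequence of guesses is eventually realized: if $M$ accepts $\hat w$ then some branch is accepting and the protocol stabilizes on \emph{accept}, while if $M$ rejects then no branch ever accepts and the protocol stays on \emph{reject}. The main obstacle is exactly this orchestration: I must make the \emph{accept} latch and the correct value of $n$ survive the repeated resets, guarantee that the counting underlying Proposition~\ref{prop:sid} has genuinely finished before any branch is trusted (otherwise the reset restarts and the current simulation is discarded), and verify that, once $n$ is fixed, the interleaving of the reset, the input-access sub-protocol and the step-by-step simulation neither loses nor corrupts the registers stored on the shared chain.
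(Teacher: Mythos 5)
Your proposal is correct and follows essentially the same route as the paper's proof: the chain serves as the $O(\log n)$-cell work tape, Proposition~\ref{prop:sid} supplies the counts $|X_{Id}|_s$ on demand to simulate reading a canonically reordered input (justified by $(f,n)$-symmetry), and fairness drives the restart-on-reject exploration of nondeterministic branches with an absorbing accept. The only differences are implementation details (you recompute the input-head location from a global offset where the paper tracks it incrementally via the pair $(Id,s)$ and two counters, and you add an explicit step counter), which do not change the argument.
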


\begin{proof}
  The main key of this proof is to use the chain 
  as a tape for a Turing Machine. To simulate the tape of
  the Turing Machine, we store the position where the head of the
  Turing machine is by memorizing on which multiset the head is (via
  the corresponding identifier) and its relative position inside this
  multiset: the previous protocol will be used to find out the number of
  agents with some input symbol in the current multiset, in order to
  update all these information and simulate the evolution
  of the Turing Machine  step by step.

More precisely, let $M\in MNSPACE(\log n, \log n)$. There exists some $k\in\mathbb{N}$ such that $M$ uses at most $k\log n$ bits
for each input of size $n$.
 To an input $x_1$\#$x_2$\#$\ldots$\#$x_{f(n)}$, we associate the input configuration
with, for each $s\in\Sigma$ and for each $i\le f(n)$, $|x_i|_{s}$ agents in state $k$ with the identifier $(i-1)$,
 $|x_i|_{s}$ being the number of $s$ in $x_i$.

The idea is to use the chain as the tape of the Turing Machine. We
give $k$ bits to each agent,  so that
the protocol  has a tape of the correct length (the chain is of size $\log n$).
We just need to simulate the reading of the input (the writing
will be intuitively be performed by the leader keeping track of the identifier
of the agent where the machine's head is).
The protocol starts by counting the population and resetting agents after that.

We assume that symbols on $\Sigma$ are ordered. 
Since the language recognized by $M$ is $\log n$-symmetric, we can reorganize the input by permuting the $x_i$'s
such that the input symbols are ordered (i.e. $\Sigma=\{s_1,s_2,\ldots\}$ and 
$x_i=s_1s_1\ldots s_1s_2\ldots$).

Here are the steps that perform the simulation of reading the tape: 
\begin{enumerate}
\item[0.] The chain contains two counters. The leader also stores an identifier $Id$ and a state $s$.
The first counter stores the total of $s_{Id}$ computed at some point by the protocol of
Proposition~ \ref{prop:sid}. The second counter $c_2$ is the position the reading head.
The simulated head is on the $c_2$th $s$ of $x_{Id+1}$.
\item At the beginning of the protocol, the population counts the number of agents with input $s_1$
and identifier $0$, where $s_1$ is the minimal
element of $\Sigma$. $c_2$ is initialized to $1$.
\item When the machine needs to go to the right on the reading tape, $c_2$
  is incremented. If $c_2$ equals $c_1$, then
the protocol looks for the next state $s'$ in the order of $\Sigma$, and count the number of $s'_{Id}$. If this
value is $0$, then it takes the next one. If $s$ was the last one, then the reading tape will consider to be on a \#.

If the reading head was on a \#, then it looks for the successor identifier of $Id$, and counts the number of $s_1$.
If $Id$ was maximal, then the machine knows it has reached the end of the input tape.

\item The left movement process is similar to this one.
\end{enumerate}

This protocol can simulate the writing on a tape and the reading of the input.\\

To simulate the non deterministic
part, each time the leader needs to make a non deterministic choice
between two possibilities, it looks for an agent. If the first
agent the leader meets has its identifier equal to $1$, then the leader makes
the first choice, otherwise it makes the second one.
When the computation is over, if it rejects, it reset the simulation
and starts a new one.

By fairness, if there is a path of non deterministic choices for the machine,
the protocol will at some point use it and accept the input, as would do $M$.
If not, as all the agents will stay in a rejecting state, the protocol will reject
the input.

This protocol simulates $M$.\hfill \qed

\end{proof}

\begin{corollary}
Let $f$ such that $f(n)=\Omega(\log n)$. 

Any language in $MNSPACE(f(n),$ $f(n))$ can be recognized by an homonym population protocol with $f(n)$ identifiers.
\end{corollary}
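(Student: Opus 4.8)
The plan is to observe that this corollary is exactly the preceding proposition with every occurrence of $\log n$ replaced by $f(n)$, and then to check that each ingredient of that proof survives the change. The preceding proposition used a chain of $\log n$ leaders as a work tape of length $O(\log n)$; here the chain automatically has length $f(n)$, with one leader per distinct identifier, so it furnishes a work tape of length $O(f(n))$. Given a machine $M \in MNSPACE(f(n), f(n))$, I would fix the constant $k$ such that $M$ uses at most $k\cdot f(n)$ work cells on inputs of size $n$, and allot $k$ bits of state to each of the $f(n)$ chain agents; this yields precisely the $O(f(n))$ cells that $M$ requires.

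Concretely, first I would rebuild the chain and run the counting protocol of Proposition~\ref{compte} to encode $n$ in binary along it. This is the one step where the magnitude of $f(n)$ matters: encoding $n$ needs $\log n$ digits, which is available once $f(n)\ge\log n$, and for the general hypothesis $f(n)\ge\alpha\log n$ with $\alpha<1$ the remark following Proposition~\ref{compte} lets the leaders count in base $\lceil e^{1/\alpha}\rceil$ so that $n$ still fits on $f(n)$ digits. I would then invoke the Reset Protocol of Proposition~\ref{prop:reset} so that the leader detects when every agent has been counted and the size is correctly written. The simulation itself is unchanged from the preceding proposition: the head position on the input is encoded by the triple consisting of the current identifier $Id$, the current symbol $s\in\Sigma$, and an offset $c_2$ inside the block $x_{Id+1}$; the block boundaries are detected by querying the count of $s_{Id}$ on demand, which is exactly what Proposition~\ref{prop:sid} supplies and which is legitimate precisely because $f(n)=\Omega(\log n)$. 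Right and left moves update this triple, stepping to the next nonempty symbol of $\Sigma$ or to the successor/predecessor identifier (reporting a $\#$ at block ends), while writes are carried out directly on the chain agents. Non-determinism is resolved as before by reading the identifier bit of the next agent the leader meets, and a rejecting run triggers a reset and a fresh attempt, so by fairness an accepting computation is eventually taken whenever one exists.

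I expect no genuine obstacle here: the whole pipeline is already validated in the preceding proposition, and the only point at which the argument could have secretly relied on the equality $f(n)=\log n$ rather than on the inequality $f(n)=\Omega(\log n)$ is the counting step, which the remark after Proposition~\ref{compte} already handles. The single substantive thing to record is therefore that a chain of length $f(n)$ supplies a tape of the size $O(f(n))$ demanded by $M$, so the same construction recognizes every language in $MNSPACE(f(n),f(n))$.
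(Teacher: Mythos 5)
Your proposal is correct and follows exactly the paper's own route: the paper's proof of this corollary simply reuses the protocol of the preceding proposition, noting that the size can still be computed (the remark after Proposition~\ref{compte} covering $f(n)=\Omega(\log n)$ via a change of base) and that the chain of $f(n)$ identifiers now provides a tape of length $f(n)$. Your write-up is just a more detailed unpacking of the same two observations.
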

\begin{proof}
We use the same protocol (which is possible as the size of the population can be computed).
Since the chain of identifiers has a length of $f(n)$, we have access to a tape of size $f(n)$.\hfill \qed

\end{proof}

\subsection{Polylogarithmic Space}

We prove now the exact characterization of what can be computed
by our model: functions computable by Turing Machines on
polylogarithmic space. To prove it, we first prove several propositions.
The combination of the three following results permit to conclude the main theorem.

\begin{proposition}\label{prop:truc}
Let $f$ such that $f=\Omega(\log n)$.
Let $k$ be a positive integer.

Any language in $MNSPACE\left(\log^kn,f(n)\right)$ can be recognized by a protocol
with $f(n)$ identifiers.
\end{proposition}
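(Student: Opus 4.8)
The plan is to simulate a nondeterministic Turing machine $M$ that uses at most $c\log^k n$ cells, reusing wholesale the counting (Proposition~\ref{compte}), reset (Proposition~\ref{prop:reset}) and input-access (Proposition~\ref{prop:sid}) machinery, and upgrading only the \emph{tape}. The corollary already yields a tape of length $f(n)$ from a single chain; the obstacle is that $f(n)$ may be as small as $\Theta(\log n)$, whereas $M$ needs $\Theta(\log^k n)$ cells. I would overcome this by addressing tape cells not by a single identifier but by tuples of identifiers: fix $d$ to a constant at least $k+1$ (with a few auxiliary slots for the leader), and encode a tape position $p\in[0,c\log^k n)$ by its base-$f(n)$ expansion $(p_0,\dots,p_k)$, a $(k+1)$-tuple of identifiers. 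This suffices for large $n$: since $f(n)=\Omega(\log n)$, say $f(n)\ge\alpha\log n$, one has $f(n)^{k+1}\ge\alpha^{k+1}\log^{k+1}n=(\alpha^{k+1}\log n)\,\log^k n\ge c\log^k n$, so $k+1$ digits address more than enough cells (the finitely many small inputs are decided by a hard-coded table inside the finite control). There are also enough agents to realize these cells, as $c\log^k n\le n$ eventually.

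For the representation and the tape operations, an agent playing the cell at position $p$ stores the tuple $(p_0,\dots,p_k)$ in its slots together with a constant amount of state holding that cell's symbol, while the leader keeps the current head position as such a tuple in its own slots. I would allocate cells lazily: when the head first reaches a position, the leader recruits (via a flag bit, by fairness) a still-unused agent and stamps it with that address, an as-yet-unvisited position reading as blank. Moving the head right (resp.\ left) is incrementing (resp.\ decrementing) the base-$f(n)$ counter stored in the leader: add one to the least significant digit, and on overflow reset it to $0$ and carry, overflow being detected because the chain tells the leader the maximal identifier via the successor relation the model provides. To read or write the head cell, the leader carries its head tuple and, on each interaction, compares it componentwise with the other agent's stored tuple; a single interaction suffices, since $\delta$ sees both tuples at once and the $\rho$-invariance of the model lets it test equality of the involved identifiers. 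By fairness the leader eventually meets the unique agent holding that tuple and reads or updates its symbol.

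As in the $MNSPACE(\log n,\log n)$ construction, all of this is wrapped inside the reset protocol: whenever the underlying counting protocol makes progress (revealing that the previous count was premature), the whole simulation is discarded and restarted, so that by fairness $M$ is eventually simulated on a configuration where the population size is correctly computed and the leader has genuinely visited every agent. Nondeterministic choices are resolved exactly as before, using the identifier of the first agent met as a bit, and on a rejecting halt the simulation is reset and another path tried; by fairness some accepting path is found iff $M$ accepts, and otherwise every agent remains in a rejecting interpretation. Because $M$ recognizes an $(f,n)$-symmetric language, reordering the input by identifier and then by symbol, as in the earlier proof, does not affect acceptance.

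The delicate point, which I expect to be the main obstacle, is the consistent maintenance of the tuple-addressed tape: I must guarantee that distinct visited positions receive distinct agents, that the leader can always relocate the unique agent holding a given address, and that lazy allocation never exhausts the supply of fresh agents, all while the reset mechanism may restart the computation at any moment and while identifiers are chosen adversarially. I would discharge this by tying cell allocation to the same size counter used by the reset protocol, so that the leader only trusts, and only begins the $M$-simulation on, a tape whose allocation it has verified against the correctly computed value of $n$. This prevents a half-built tuple-addressed tape from ever being mistaken for a finished one, which is the crux that makes the polylogarithmic-space simulation go through.
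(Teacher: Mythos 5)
Your proposal is correct and follows essentially the same route as the paper: encode $f(n)^m\ge\log^k n$ virtual tape addresses as $m$-tuples of identifiers, have the leader hand them out by incrementing a base-$f(n)$ counter, and run the Turing-machine simulation on the resulting extended chain, all wrapped in the counting/reset machinery. The only divergence is that you allocate cell addresses lazily as the head visits them whereas the paper distributes all $f(n)^m$ (or $n$, whichever is reached first) tuple-identifiers eagerly before the simulation starts; this is an implementation detail, and your closing discussion of tying allocation to the verified size counter addresses the resulting consistency concerns adequately.
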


\begin{proof}
The idea here is that, by combining several identifiers together, we get much more identifiers available, increasing
the chain and space of computation: Indeed, if we combine $m$ {\id}s together
in a $m$-tuple, then we get $f(n)^m$ possible identifiers. The basic idea is to
count in base $f(n)$: the leader finds $f(n)^m$ agents and
distribute to each of them a unique new identifier (encoding as a $m$-tuple of
original identifiers). 

To do so, first the population performs the computation of the size of
the population. This also builds a chain of
all the identifiers. Then, the leader creates a counter of $m$ identifiers, initialized at $(0,0,\dots,0)$ (seen as the number
$0\ldots0$ written in base $f(n)$). It looks for an agent in state $N$
(i.e. $N$ corresponds to the state of a agent which has not been given
a $m$-tuple identifier yet) and transmits its new identifier: that is the
current $m$-tuple stored in the leader. The leader then increments his
counter by $1$. As soon as it has finished (by giving $f(n)^m$ or $n$ identifiers, depending on what happens first),
the protocol can then work on a tape of space $f(n)^m$.

Since $f(n)=\Omega(\log n)$, there exists some $m$ such that $f(n)^m\ge \log^kn$.\hfill \qed

 \end{proof}

\begin{proposition}
Let $f$ such that there exists some real $r>0$ such that we have $f(n)=\Omega(\log^r n)$.

Any language in $\bigcup_{k\ge1}MNSPACE(\log^k n,f(n))$ can be
recognized by an homonym population protocol with $f(n)$ identifiers.
\end{proposition}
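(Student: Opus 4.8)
The plan is to reduce to Proposition~\ref{prop:truc}, whose only restriction is the hypothesis $f(n)=\Omega(\log n)$; the whole point of the present statement is to relax this to $f(n)=\Omega(\log^r n)$ for an arbitrarily small $r>0$, and the difficulty concentrates entirely in the regime $0<r<1$. First I would dispose of the easy case $r\ge 1$: here $f(n)=\Omega(\log^r n)$ already gives $f(n)=\Omega(\log n)$, so for any fixed language $L$ in the union there is some $k$ with $L\in MNSPACE(\log^k n,f(n))$, and Proposition~\ref{prop:truc} applied to that particular $k$ yields a protocol with $f(n)$ identifiers recognizing $L$. Since the protocol is allowed to depend on $L$ (hence on $k$), taking the union over $k$ requires nothing more.

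The substantial case is $0<r<1$, where $f(n)$ may be $o(\log n)$ and so the chain of $f(n)$ real identifiers is too short even to write $n$ in binary; in particular the counting protocol of Proposition~\ref{compte}, the reset protocol of Proposition~\ref{prop:reset}, and the input-access protocol of Proposition~\ref{prop:sid} cannot be launched directly. My strategy is to manufacture $\Omega(\log n)$ \emph{virtual} identifiers before invoking any of this machinery. Fix an integer $m\ge 1/r$, so that $f(n)^m=\Omega(\log^{rm}n)=\Omega(\log n)$. The leader first builds the real chain of the $f(n)$ identifiers by the per-identifier leader election of Section~\ref{sec:3p} (a construction that never needs the population size). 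Using this chain the leader enumerates $m$-tuples of original identifiers, read as integers written in base $f(n)$, and hands out the successive values $0,1,2,\dots$ as virtual identifiers (stored in the slots) to the agents it meets in the unassigned state, incrementing its $m$-tuple counter after each assignment exactly as in the proof of Proposition~\ref{prop:truc}. This yields a virtual chain of $\min(f(n)^m,n)$ consecutive virtual identifiers, on which the zero-test and the successor predicate are available because the leader generates them in order.

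Once this virtual chain of length $\Omega(\log n)$ is in place, the remaining propositions apply to it essentially unchanged: the counting protocol writes $n$ on the virtual chain (using a large enough constant base, as in the remark following Proposition~\ref{compte}), the reset protocol synchronises the population, and the input-access protocol of Proposition~\ref{prop:sid} counts, for each symbol $s$ and each \emph{real} identifier $Id\le f(n)$, the number of agents initialised as $s_{Id}$ — the real identifiers being left untouched by the virtual reassignment, so the original $f(n)$-multiset input remains fully recoverable. The Turing Machine simulation of the previous subsection then runs on the virtual chain, and by taking $m$ as large as needed the available tape reaches $\log^k n$ for the $k$ witnessing $L\in MNSPACE(\log^k n,f(n))$; non-deterministic choices are resolved by inspecting the identifier of the next agent met, exactly as before.

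The main obstacle is the bootstrapping itself: the virtual identifiers must be assigned before any reliable count exists, yet the leader has no way to know a priori that it has reached every agent, since the short real chain cannot hold $n$. This is resolved exactly as in Proposition~\ref{prop:reset}: the assignment-and-counting phase is run under the restart discipline of the reset protocol on the \emph{growing} virtual chain, so that whenever the leader detects through a fresh interaction that agents remain, it restarts; fairness then guarantees that the leader eventually assigns all $\min(f(n)^m,n)$ virtual identifiers, the count stabilises, and the final reset correctly signals that the simulation of $M$ may begin. \qed
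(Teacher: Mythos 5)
Your proposal is correct and follows essentially the same route as the paper: split off the easy case $r\ge 1$ via Proposition~\ref{prop:truc}, and for $0<r<1$ take $\ell$-tuples of identifiers with $\ell\ge 1/r$ so that $f(n)^\ell=\Omega(\log n)$, then run the counting, reset and simulation machinery on the resulting extended chain. The only (immaterial) difference is that the paper lets the counting protocol extend to tuples adaptively when the real identifiers run out, whereas you distribute the virtual identifiers eagerly under a reset-style restart discipline; both resolve the same bootstrapping issue and your write-up is in fact more explicit about it than the paper's.
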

\begin{proof} 
We only need to treat the counting protocol when $r<1$ (the case $r=1$ is treated in Proposition~\ref{prop:truc},
the case $r>1$ is a direct corollary of this proposition).

In previous constructions, to count the population, we needed at least
$\log n$ identifiers. The idea is to use $\ell$-tuples to encode
identifiers, when the number of identifiers is too low. 

By taking  $\ell=\lceil\frac 1r\rceil$ we have $f(n)^\ell=\Omega(\log n)$,
and a counting protocol can be implemented after distributing these
new identifiers (using a process similar to previous proposition). 

More precisely, in the counting protocol, when agents realize that $f(n)$ might be reached and they need more identifiers, they use the tuple,
storing the maximal \id\ $Id_1$. If at some point, they realize that a higher \id\ $Id_2$ exists,
they just do an appropriate update of the numbers stored in the chain.\hfill \qed

  \end{proof}









\begin{proposition}\label{th:olivier}
Consider a predicate computed by a protocol with $f(n)$
identifiers. Assume that $f(n)=O(\log^\ell n)$ for some $\ell  \ge 1$.   

The predicate is in $MNSPACE(\log^k n,f(n))$ for some positive integer $k$. 
\end{proposition}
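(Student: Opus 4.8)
The plan is to decide acceptance by searching the configuration graph of the protocol, whose vertices admit a polylogarithmic description under the hypothesis $f(n)=O(\log^\ell n)$. First I would bound the description length of a configuration. Only the $f(n)$ identifiers $\{0,\dots,f(n)-1\}$ ever occur, so an agent state lies in the finite set obtained from $Q=B\times U\times(U\cup\{\_\})^d$ by restricting identifiers to this range; its cardinality is $P=|B|\cdot f(n)\cdot (f(n)+1)^d=O(f(n)^{d+1})=O(\log^{\ell(d+1)}n)$. Since $\delta$ acts only on pairs of states and the interpretation depends only on which outputs occur, configurations differing by a permutation of agents behave identically; hence a configuration is faithfully represented by the vector of the $P$ counts (one per state, each in $[0,n]$, summing to $n$), which occupies $O(P\log n)=O(\log^{\ell(d+1)+1}n)$ bits. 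Put $k=\ell(d+1)+1$. From the input $x_1\#\cdots\#x_{f(n)}$ the machine reads $f(n)$ and $n$ and computes the initial configuration $C_0$ by setting the count of $(\iota(s),i,\_,\dots,\_)$ to $|x_{i+1}|_s$, and it checks a single step $C\to C'$ by selecting two states of positive count and evaluating $\delta$ on concrete identifiers of $[0,f(n)-1]$ through its finite rename-invariant table; every such elementary operation uses $O(\log^k n)$ space.

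Second, I would reduce acceptance to reachability. Call a configuration $1$-\emph{stable} if $\omega$ equals $1$ on every agent of every configuration reachable from it. I claim the predicate value on $x$ equals $1$ if and only if $C_0$ can reach a $1$-stable configuration. For the ``if'' direction, if $C_0$ reaches a $1$-stable $C^\ast$ then there is a fair execution passing through $C^\ast$, and from $C^\ast$ every reachable configuration has interpretation $1$; this execution stabilizes to $1$, so, the protocol computing a predicate, its value is $1$. For the ``only if'' direction, if no reachable configuration were $1$-stable then every reachable configuration could reach one with interpretation $\neq 1$, and the standard round-robin construction would yield a fair execution with interpretation $\neq 1$ infinitely often, contradicting that every fair execution stabilizes to $1$.

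Third, I would verify this condition in nondeterministic polylogarithmic space. Reachability between two configurations of size $O(\log^k n)$ lies in $NSPACE(\log^k n)$: guess the path one configuration at a time, storing only the current configuration and a step counter of $O(\log^k n)$ bits (the number of configurations is at most $(n+1)^P=2^{O(\log^k n)}$). The property ``$C^\ast$ is $1$-stable'' is the complement of ``some configuration carrying an output $\neq 1$ is reachable from $C^\ast$'', hence also in $NSPACE(\log^k n)$ by closure of $NSPACE$ under complement (Immerman--Szelepcs\'enyi). The machine therefore guesses $C^\ast$, verifies that it is reachable from $C_0$, and verifies that it is $1$-stable, all within $NSPACE(\log^k n)$. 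The recognized language is $(f,n)$-symmetric, since $C_0$ depends only on the multiset statistics $|x_{i+1}|_s$; thus the predicate lies in $MNSPACE(\log^k n,f(n))$.

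The main obstacle is the equivalence of the second step: collapsing the universally quantified convergence demanded by fairness into a single existential reachability of a $1$-stable configuration. This asymmetry is what keeps the whole test existential (and hence directly in $NSPACE$); it works only because the protocol computes a genuine predicate, so that reaching a $(1-b)$-stable configuration is impossible whenever the value is $b$, and because closure of $NSPACE$ under complement lets the stability test itself be carried out nondeterministically.
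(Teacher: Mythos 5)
Your proposal is correct and follows essentially the same route as the paper: represent configurations by the vector of state counts (so $\log N = O(f(n)^{d+1}\log n)$), characterize acceptance as reachability of a configuration from which only interpretation-$1$ configurations are reachable, and verify the stability condition via Immerman--Szelepcs\'enyi closure of $NSPACE$ under complement. Your treatment of the fairness-to-reachability equivalence (both directions, with the round-robin argument) is in fact more explicit than the paper's, but the decomposition and the key ingredients are identical.
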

\begin{proof}
We need to prove that there exists a Turing Machine that can compute, for any given input $x$,
the output of protocol $P$.

From definitions, given some input $x$, $P$ outputs the output $y$ on
input $x$ if and only if  there exists a finite sequence $(C_i)_{i\in\mathbb{N}}$, starting from an
initial condition $C_0$ representing $x$, that reaches at some finite time
$j$  some configuration $C_j$ with interpretation $y$, and so that any
configuration reachable from $C_j$ has also interpretation $y$. 

This latter property can be expressed as a property on the graph of
configurations of the protocol, i.e. on the graph whose nodes are
configurations of $n$ agents, and whose edges corresponds to unique step
reachability:  one must check the existence of a path
from $C_0$ to some $C_j$ with interpretation $y$ so that there is no
path from $C_j$ to some other $C'$ with interpretation different from
$y$. 

Such a problem can be solved in $NSPACE(\log N)$ where $N$
denotes the number of nodes of this graph of configurations. 
Indeed, guessing a path from $C_0$ to
some $C_j$ can easily be done in $NSPACE(\log N)$ by guessing
intermediate nodes (corresponding to configurations) between $C_0$ and $C_j$. There remains to
see that testing if there is no
path from $C_j$ to some other $C'$ with interpretation different from
$y$ can also be done in $NSPACE(\log N)$ to conclude.

But observe that testing if there is a path from $C_j$ to some other
$C'$ with interpretation different from $y$ is clearly in $NSPACE(\log
N)$ by guessing $C'$. From Immerman-Szelepcsnyi's Theorem
\cite{immerman1988nondeterministic,Szelepcsenyi} we know that one has 
$NSPACE(\log N)=co-NSPACE(\log N)$. Hence, testing if there is no path
from $C_j$ to some other $C'$ with interpretation different from
$y$ is indeed also in $NSPACE(\log N)$.

It remains now to evaluate $N$: 
For a given identifier $i$, an agent encodes basically some basic state $b
\in B$, and $d$ identifiers $u_1,u_2,\dots,u_d$. There are at most $n$ agents
in  a given state $(i,b,u_1,u_2,\dots,u_d)$.
Hence $N=O(n^{|B|\cdot f(n)^{d+1}}).$ 
In other words, the algorithm above in $NSPACE(\log N)$ is
hence basically in $MNSPACE((|B| \cdot f(n)^{d+1}) \log n,f(n))$. As a
consequence, this
is 
included in $MNSPACE(\log^k n,f(n))$
for some $k$. \hfill \qed

\end{proof}

\begin{theorem}\label{th:main}
Let $f$  such that for some $r$, we have $f(n)=\Omega(\log^r n)$.
The set of functions computable by homonym population protocols with $f(n)$
identifiers corresponds exactly to $\bigcup_{k\ge1}MNSPACE(\log^kn,f(n))$.

\end{theorem}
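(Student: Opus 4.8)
The statement is an equality between two classes, so the plan is to prove the two inclusions separately; each is already isolated in one of the three preceding propositions, and the work of the theorem is only to assemble them correctly.

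First I would establish the \emph{computability} inclusion, namely that every language in $\bigcup_{k\ge1}MNSPACE(\log^kn,f(n))$ is recognized by a homonym population protocol with $f(n)$ identifiers. This is exactly the content of the second of the two preceding propositions, whose hypothesis is precisely $f(n)=\Omega(\log^r n)$ for some $r>0$. I would recall that its proof splits on $r$: for $r\ge1$ one invokes Proposition~\ref{prop:truc}, which groups $m$ original identifiers into an $m$-tuple super-identifier so that the chain grows to $f(n)^m\ge\log^k n$ cells, on which the Turing machine is then simulated step by step using the counting, reset and input-access subprotocols; for $r<1$ one first bootstraps the counting protocol itself by grouping $\ell=\lceil 1/r\rceil$ identifiers so that $f(n)^\ell=\Omega(\log n)$, after which the same tuple construction applies. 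This gives the containment of the right-hand side in the left-hand side.

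Next I would establish the \emph{converse} inclusion, that every predicate computed by a homonym protocol with $f(n)$ identifiers lies in $\bigcup_{k\ge1}MNSPACE(\log^kn,f(n))$; this is Proposition~\ref{th:olivier}. I would recall that the acceptance condition is reformulated as a property of the configuration graph: one checks, in $NSPACE(\log N)$ where $N$ is the number of configurations, the existence of a path from $C_0$ to a configuration with the right interpretation together with the \emph{absence} of an escaping path, the negated reachability being handled by Immerman--Szelepcsényi. Since a state is a basic state together with $d$ identifiers and there are at most $n$ agents per state, $N=O(n^{|B|\cdot f(n)^{d+1}})$, so $\log N=O(f(n)^{d+1}\log n)$, which lands inside $MNSPACE(\log^k n,f(n))$ for a suitable $k$.

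The single point that deserves care — and the only genuine obstacle in combining the three results — is matching the hypotheses of the two directions. The computability inclusion needs only $f(n)=\Omega(\log^r n)$, whereas the converse needs in addition $f(n)=O(\log^\ell n)$ so that $\log N=O(f(n)^{d+1}\log n)$ stays polylogarithmic and hence inside the union; with $f(n)=O(\log^\ell n)$ one gets $f(n)^{d+1}=O(\log^{\ell(d+1)}n)$ and may take $k=\ell(d+1)+1$. The exact characterization therefore holds for $f$ polylogarithmic, i.e. $f=\Theta(\log^r n)$ as recorded in Table~\ref{tab:Guerraoui}, under which both bounds on $f$ are available and the two inclusions coincide. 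I would close by noting that this polylog hypothesis is essential: were $f$ allowed to grow faster than every polylog the converse inclusion would break and the power would jump to the strictly larger class of Theorem~\ref{th:sqrt}.
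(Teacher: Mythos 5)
Your proof follows exactly the paper's route: the theorem is obtained by combining the two preceding computability propositions (the tuple-of-identifiers construction giving the forward inclusion, bootstrapped through $\ell$-tuples when $r<1$) with Proposition~\ref{th:olivier} (configuration-graph reachability plus Immerman--Szelepcs\'enyi for the converse), which is precisely the ``combination of the three following results'' the paper invokes without further detail. Your observation that the converse direction needs the additional upper bound $f(n)=O(\log^\ell n)$ --- which the theorem's stated hypothesis $f(n)=\Omega(\log^r n)$ does not supply but Table~\ref{tab:Guerraoui}'s $\Theta(\log^r n)$ does --- is a correct and worthwhile precision that the paper leaves implicit.
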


\section{The Rest of the Hierarchy}\label{resthirera}

\subsection{Population with $n^\epsilon$ Identifiers}

One can go from  $n^\epsilon$ (with $\epsilon >0$) to a space of
computation equivalent to the case where $f(n)=n$:  We just need to use a $k$-tuple of identifiers,
as in Proposition~\ref{prop:truc}.


\begin{theorem}\label{th:sqrt}
Let $f$ such that there exists some $k\in \mathbb{N}$ such that $f(n)\ge n^{1/k}$.
The set of functions computable by homonym population protocols with $f(n)$
identifiers corresponds exactly to $MNSPACE(n\log n,f(n))$.
\end{theorem}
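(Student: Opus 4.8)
The plan is to prove the two inclusions separately: first that every homonym population protocol with $f(n)$ identifiers computes a function in $MNSPACE(n\log n, f(n))$ (soundness), and conversely that every $(f,n)$-symmetric language in $MNSPACE(n\log n, f(n))$ is computed by such a protocol (completeness).

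For completeness, I would exploit the hypothesis $f(n) \ge n^{1/k}$ exactly as in Proposition~\ref{prop:truc}: grouping identifiers into $k$-tuples yields $f(n)^k \ge n$ distinct composite identifiers. First the population computes its size $n$ (Proposition~\ref{compte} applies, since $f(n) \ge n^{1/k} = \Omega(\log n)$), then the leader walks through the agents handing out composite identifiers one by one, counting in base $f(n)$, so that each of the $n$ agents receives a \emph{distinct} composite identifier in $\{0, \dots, n-1\}$. At this point we are exactly in the community protocol setting of Guerraoui and Ruppert: every agent has a unique identifier, and each stored composite identifier occupies $O(\log n)$ bits, so the $n$ agents jointly store $O(n\log n)$ bits. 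Invoking Theorem~\ref{th:eux} then lets us simulate any $NSPACE(n\log n)$ machine. The one subtlety is the input: our input is presented as $f(n)$ multisets rather than in the community convention, so I would have the leader use the access-to-input protocol of Proposition~\ref{prop:sid} to recover every count $|x_i|_s$ and write a canonical (sorted) representative of $x_1\#\dots\#x_{f(n)}$ on the simulated tape. Because the target language is $(f,n)$-symmetric, feeding this canonical form is harmless, and the simulated machine decides membership.

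For soundness I would reuse the configuration-graph argument of Proposition~\ref{th:olivier}: the protocol outputs $y$ on input $x$ iff there is a path from $C_0$ to some $C_j$ of interpretation $y$ and no path from $C_j$ to any configuration of interpretation different from $y$, a property decidable in $NSPACE(\log N)$ by guessing intermediate configurations and applying the Immerman--Szelepcs\'enyi theorem to the co-reachability test. The only change is a tighter bound on $N$: rather than counting configurations as multisets of states, I would describe a configuration agent by agent, recording for each agent only its basic state ($|B|$ choices) and its $d$ stored identifiers ($(f(n)+1)^d$ choices), since each agent's own identifier is fixed by the input. This gives $N \le (|B|\,(f(n)+1)^d)^n$, hence $\log N = O(n \log f(n)) = O(n\log n)$ because $f(n) \le n$. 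The resulting machine runs in $NSPACE(n\log n)$, and since the computed predicate is $(f,n)$-symmetric it lies in $MNSPACE(n\log n, f(n))$.

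The step I expect to be the main obstacle is the input handling in the completeness direction: reconciling the homonym input format with the community protocol's ordered-by-identifier input convention, and checking that the reset and counting machinery (Propositions~\ref{prop:reset} and~\ref{prop:sid}) can be interleaved with the identifier-distribution phase so that the simulation is restarted cleanly whenever the size estimate is revised. By contrast, the configuration-counting refinement in the soundness direction is routine once one observes that fixing each agent's own identifier removes the $f(n)^{d+1}$ blow-up present in Proposition~\ref{th:olivier}.
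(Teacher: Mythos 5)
Your proposal is correct and follows essentially the same route as the paper, which proves the completeness direction exactly by the $k$-tuple identifier trick of Proposition~\ref{prop:truc} to reduce to the community-protocol setting of Theorem~\ref{th:eux}, and leaves the soundness direction to the configuration-graph argument of Proposition~\ref{th:olivier}. You supply more detail than the paper does, in particular the agent-by-agent configuration count giving $\log N = O(n\log n)$ (which is indeed the right refinement here, since the multiset count of Proposition~\ref{th:olivier} would blow up for $f(n)=n^{1/k}$) and the input-canonicalization step, both of which are consistent with the paper's intent.
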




\begin{remark}
This result does not need the two restrictions of knowing if an identifier is equal to $0$ or if
two identifiers are consecutive. The result holds when the set of possible identifiers
$U$ is chosen arbitrarily and when the
restrictions over the rules are those in \cite{guerraoui2009names}.
\end{remark}

\subsection{Population with $o(\log n)$ Identifiers}

This time, we consider we have really few identifiers.
To write the size of the population in binary, we need to differentiate
$\log n$ agents. With $o(\log n)$ identifiers, it is no longer possible.
Because of that, the counting protocol and the reset protocol can no longer
be used to simulate Turing Machines.

In this section, we consider two cases: $f(n)=o(\log n)$ and
a constant number of identifiers.
We have a characterization when
this number is constant: it leads to population protocols. In the general case,
the population is more powerful, but we do not have any exact characterization.


\begin{theorem}\label{hppconst}
Let $f$ such that for some $k\in\N$, we have, for all $n$, $f(n)\le k$.

The set of functions computable by homonym population protocols with $f(n)$
identifiers corresponds to the semilinear sets over $\Sigma\times[0,k-1]$.
\end{theorem}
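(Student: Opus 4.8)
The plan is to exploit the fact that with at most $k$ identifiers the effective state space of each agent collapses to a constant-size set, so that such a protocol is literally an ordinary population protocol over the enlarged input alphabet $\Sigma\times[0,k-1]$. Since identifiers are drawn from $\{0,1,\dots,f(n)-1\}$ with $f(n)\le k$, and an agent may only store identifiers it has heard, every reachable agent state lies in $B \times \{0,\dots,k-1\} \times (\{0,\dots,k-1\}\cup\{\_\})^d$, a set of size at most $|B|\cdot k\cdot(k+1)^d$, independent of $n$. Restricting $\delta$ to these states yields a genuine finite population protocol whose input, read off from the initial states $(\iota(s),i,\_,\dots,\_)$, is exactly a multiset over $\Sigma\times[0,k-1]$, i.e.\ a count vector $(c_{s,i})$. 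For the first inclusion I would therefore invoke the classical characterization of population protocols \cite{angluin2007cpp,AspnesADFP2004}: any predicate stably computed by this finite protocol is semilinear over $\Sigma\times[0,k-1]$.

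For the converse I would start from an arbitrary semilinear predicate over $\Sigma\times[0,k-1]$ and, again by \cite{angluin2007cpp,AspnesADFP2004}, fix a population protocol $\mathcal{A}=(R,\iota_R,\delta_R,\omega_R)$ over the alphabet $\Sigma\times[0,k-1]$ that stably computes it. The difficulty is that a homonym protocol cannot read an agent's exact identifier: by the relabelling-invariance of $\delta$ it can only detect whether an identifier is $0$, whether two identifiers coincide, and whether one is the successor of another. The main step is thus to recover the exact value $i\in\{0,\dots,f(n)-1\}$ of each identifier and store it in the finite component $B$. I would do this with a chain/level assignment: set $d=0$ and let $B$ carry the input symbol, a level field $\ell\in\{0,\dots,k-1\}\cup\{\bot\}$, and a state of $R$. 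An agent whose own identifier is $0$ sets $\ell:=0$; an agent copies the level of any agent sharing its identifier; and an agent whose identifier is the successor of a level-$\ell'$ agent sets $\ell:=\ell'+1$. All three rules use only the permitted tests, and since every identifier in $\{0,\dots,f(n)-1\}$ is present, a straightforward induction on the level shows that, by fairness, every agent eventually acquires its true identifier as its level, and no agent is ever assigned a wrong level.

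Once an agent knows $(s,\ell)$ it initialises its $R$-component to $\iota_R(s,\ell)$, and the protocol simulates $\delta_R$ on any pair of agents that have both been levelled. Correctness then follows from a composition argument driven by fairness. Let $C^\ast$ be the configuration reached once every agent has been levelled and hence $R$-initialised. Because a simulation step fires only between two already-initialised agents and leaves the levels fixed, each such step is exactly a transition of $\mathcal{A}$ that does not involve the not-yet-initialised agents; replaying the same steps from the genuine initial configuration $C_0^{\mathcal{A}}$ (where all agents start in their $\iota_R$-states at once) produces the identical sequence, so $C^\ast$ is reachable from $C_0^{\mathcal{A}}$ in $\mathcal{A}$. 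From $C^\ast$ on, the homonym execution is a pure, fair execution of $\mathcal{A}$; prepending the finite path $C_0^{\mathcal{A}}\to^\ast C^\ast$ yields a fair execution of $\mathcal{A}$ from its initial configuration (fairness is a tail property, so a finite prefix is harmless), which by hypothesis stabilises to the correct output. Hence our homonym protocol stabilises to the same output.

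I expect the main obstacle to be precisely this interplay between the identifier-recovery phase and the simulation phase: one must argue that the simulation may safely run on partially-levelled populations without corrupting the computation, which is what the reachability-of-$C^\ast$ and tail-fairness observations handle. A secondary point to state carefully is the domain of the predicate---only count vectors $(c_{s,i})$ whose support on identifiers equals $\{0,\dots,f(n)-1\}$ arise as inputs---so the correspondence is with semilinear sets over $\Sigma\times[0,k-1]$ restricted to this admissible domain.
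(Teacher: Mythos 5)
Your proposal is correct and follows essentially the same route as the paper: the forward inclusion by observing that with a bounded identifier set the protocol is literally a finite-state population protocol over $\Sigma\times[0,k-1]$, and the converse by having each agent recover its identifier value through the successor relation starting from $0$ and then running the classical population protocol for the given semilinear predicate. Your additional care about the fairness/composition argument for the partially-initialised phase, and the remark on the admissible input domain, only flesh out details the paper leaves implicit.
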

\begin{proof}
Each agent tries to put in its internal state the value of its identifier 
by using the following algorithm:
\begin{itemize}
\item If the identifier is $0$, then the agent puts $0$.
\item If an agent knowing its identifier $id$ meets an agent with identifier $Next(id)$, then
the second knows its identifier by incrementing $id$.
\end{itemize}
From this,  it is possible to compute semilinear predicates.
Indeed, agents know now their initial state and can run the corresponding population protocol.\\

The proof that only semilinear sets can be computed is quite simple: we see a community protocol
as a population protocol where the identifiers are directly included in the states.
More precisely, $Q'=B\times[1,k]\times([1,k]\cup\{\_\})^{d+1}$, and $\delta'=\delta$.\hfill \qed

\end{proof}

\begin{proposition}\label{contrex1}
There exists some non semilinear predicates over $\Sigma\times[1,k]$
and some function $f=o(\log n)$ such that there exists
a homonym population protocols with $f(n)$ identifiers that computes this predicate.
\end{proposition}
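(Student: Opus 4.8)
The plan is to exploit the slack in the single hypothesis $f(n)=o(\log n)$: it is perfectly compatible with $f(n)=\Omega(\log^{r} n)$ for a well-chosen $0<r<1$. Concretely I would take $f(n)=\lceil \log^{1/2} n\rceil$, which is simultaneously $o(\log n)$ and $\Omega(\log^{1/2} n)$. For such an $f$, Theorem~\ref{th:main} applies verbatim: homonym population protocols with $f(n)$ identifiers compute exactly $\bigcup_{k\ge 1}MNSPACE(\log^{k}n,f(n))$. It then suffices to exhibit one member of this class that is not semilinear. I would use the predicate $P$ accepting an input iff the total number of agents $n$ is a perfect square. This depends only on $n=\sum_i|x_i|$, so it is trivially $(f,n)$-symmetric, and a nondeterministic machine decides it by guessing a root $r\le n$, writing it in binary, squaring it and comparing to $n$, all within $O(\log n)$ space. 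Hence $P\in MNSPACE(\log n,f(n))\subseteq\bigcup_{k\ge1}MNSPACE(\log^k n,f(n))$ and is computable with $f(n)=o(\log n)$ identifiers. Since the set of perfect squares is not eventually periodic, it is not definable in Presburger arithmetic and thus not semilinear; this is the only genuine verification, and it is standard. Contrasted with Theorem~\ref{hppconst}, which pins the constant case exactly to semilinear sets, this already proves the statement.

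I would also add a second, self-contained construction covering functions that grow strictly more slowly than any $\log^{r}n$, e.g.\ $f(n)=\lfloor\log\log n\rfloor$, for which Theorem~\ref{th:main} does \emph{not} apply. The point is that although $o(\log n)$ identifiers are too few to write $n$ in binary, the protocol can always recover the quantity $f(n)$ itself. I would build the chain of Section~\ref{sec:3p} (one leader per identifier, linked by the successor relation) and let the leader with identifier $0$ walk it while maintaining a single parity bit; the predicate computed is ``$f(n)$ is even'', broadcast to every agent as output. Because $f(n)=\lfloor\log\log n\rfloor$ is non-decreasing and takes each value $m$ on a whole interval of $n$'s whose endpoints grow doubly exponentially, the accepted set $\{\,n : f(n)\text{ even}\,\}$ is a union of intervals with unbounded gaps, hence not eventually periodic and not semilinear, while still $f(n)=o(\log n)$.

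The main obstacle in this second construction is the usual inability of population protocols to detect the \emph{absence} of a state: the leader cannot directly certify that it has reached the maximal identifier $f(n)-1$, and hence that its parity bit is final. I would resolve this exactly as in the Reset Protocol of Proposition~\ref{prop:reset}: the leader tentatively assumes it has reached the end of the chain, commits to the corresponding parity value, and restarts (flipping its guess) whenever it later meets an agent carrying the successor identifier. Since every identifier in $[0,f(n)-1]$ is present and none larger exists, the walk stabilises by fairness on the true maximum and the broadcast output converges to the correct parity. The slick route via Theorem~\ref{th:main} needs none of this machinery, but the direct argument is what explains why the model strictly exceeds ordinary population protocols for essentially any unbounded, non-decreasing $f(n)=o(\log n)$, which is the informative content of the proposition.
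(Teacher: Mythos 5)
Your argument is correct, but your primary route is genuinely different from the paper's. The paper proves the proposition directly and concretely: each agent computes the parity of its own identifier by propagating parity along the successor relation starting from identifier $0$, and the population then runs the ordinary majority population protocol comparing the number of agents with an even identifier to the number with an odd one. This works for essentially any unbounded $f=o(\log n)$ and makes the identifiers themselves carry the non-semilinear content. You instead exploit the existential quantifier over $f$: choosing $f(n)=\lceil\log^{1/2}n\rceil$ places you inside the hypothesis of Theorem~\ref{th:main}, after which any non-semilinear language in $MNSPACE(\log n,f(n))$ (perfect squares) finishes the proof. That is a valid and very economical proof of the literal statement, but it buys nothing in the regime this section is really concerned with --- functions below every $\log^{r}n$, where the counting and reset machinery is unavailable --- whereas the paper's construction does. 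Your second construction, computing the parity of $f(n)$ for $f(n)=\lfloor\log\log n\rfloor$ via parity propagation plus fairness-based restarts, recovers exactly that regime and is essentially the paper's technique applied to a different (equally non-semilinear) predicate; your handling of the inability to detect the absence of a larger identifier is the standard and correct one. One small remark: your perfect-square predicate depends only on $n$, so it is non-semilinear already as a predicate over $\Sigma$ alone, whereas the paper's choice has the arguably more informative feature that the identifier component of $\Sigma\times[1,k]$ is precisely what destroys semilinearity.
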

\begin{proof}
For example, it is possible to compute the predicate asking if the
 agents with an even identifier are in majority.

To compute this it suffices to determine for each agent if its identifier
is even or odd. The way to compute it is as follows:
\begin{itemize}
\item If the identifier is $0$, then the agent remembers its identifier is odd.
\item If an agent knowing its parity meets an agent with an identifier
that is the directly next one, then
the second knows its parity by switching the other's one.
\end{itemize}
We then run the protocol $[x_{\text{even identifier}}>x_{\text{odd identifier}}]$.\hfill \qed

\end{proof}

\begin{remark}\label{contrex2}
Another counter-example is that we can compute the following predicate
 $$\left[\sum\limits_{id\le id_{max}/2} a_ix_{s_i,id}
-\sum\limits_{id> id_{max}/2} b_ix_{s_i,id}\ge c\right]$$
where $x_{s_i,id}$ is the number of agents with input $s_i$ and identifier $id$.
This predicate corresponds to a threshold predicate when we take into consideration
whether the identifier of the agent is in the first or second half of the present ones.

This protocol computes the threshold predicate with first value $b_i$ for each agent
with non $0$ identifier and input $s_i$, and $a_i$ for the agents with identifier $0$ and
input $s_i$.

To compute the medium identifier, we have $d=2$.
For an agent $q_{i,j,k}$, $i$ means agent's own identifier, $j$ is the medium candidate and,
$k$ represents $2j$ or $2j+1$. At the beginning, if the identifier is $0$, then we have $q_{0,0,0}$,
otherwise, we get $q_{i,\_,\_}$. The state $q$ means "I need to increment $k$ if $k+1$
is present". The state $q^{++}$ means "I need to increment both $j$ and $k$ if
$k+1$ is present". The state $q^{+}$ means "I need to increment  $j$ if
$k+1$ is present". Hence, each time we increment twice $k$, we increment once $j$.

The rules are:
\begin{center}
\begin{tabular}{ r @{\hspace{0,2cm}} l @{$\rightarrow$} r @{\hspace{0,2cm}}  l l}
$q_{i,\_,\_}$ & $q'_{0,j',k'}$ & $q_{i,0,0}$ & $q'_{0,j',k'}$ & \\
$q_{i,j,k}$ & $q'_{k+1,j',k'}$ & $q^{++}_{i,j,k+1}$ & $q'_{k+1,j',k'}$ & \\
$q^{++}_{i,j,k}$ & $q'_{k+1,j',k'}$ & $q^+_{i,j,k+1}$ & $q'_{k+1,j',k'}$ & \\
$q^{+}_{i,j,k}$ & $q'_{j+1,j',k'}$ & $q_{i,j+1,k}$ & $q'_{j+1,j',k'}$ & \\
\end{tabular}
\end{center}

As soon as an agent in input $s_i$ realizes its identifier is smaller or equal to its $j$,
it adds $a_i-b_i$ to its state if possible (otherwise, it waits an occasion to add it
to another agent).

By fairness, all agents will determine at some point if their identifier is greater
or smaller to half the highest one, and then the leader will be able to compute the
right output.
\end{remark}

\section{Passively Mobile Machines}
\label{sec:4demi}

We now show how previous constructions
 improve the results about the passively mobile protocol model  \cite{chatzigiannakis2011passively}.
This section treats the case where $S(n)=O(\log\log n)$ in the
passively mobile protocol model. Table \ref{tab:Chatzigiannakis} gives a summary of this hierarchy.
$PMSPACE(f(n))$ corresponds to the class of languages recognized
by Passively Mobile Agents using space $O(f(n))$.





\begin{theorem}\label{th:grec}
%
$PMSPACE(\log\log n)=\bigcup_{k\ge1}SNSPACE(log^kn)$.\end{theorem}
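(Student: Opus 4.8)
The plan is to establish the theorem by a two-directional simulation connecting the passively mobile machine model at per-agent space $S(n)=O(\log\log n)$ with the complexity class $\bigcup_{k\ge1}SNSPACE(\log^k n)$. The central observation, which drives both directions, is that with $O(\log\log n)$ bits of memory per agent, each agent can store an integer up to roughly $\log n$, and hence an agent's memory can serve the role of an \emph{identifier} drawn from a set of size polylogarithmic in $n$. This reduces the passively mobile model at space $\log\log n$ to a homonym population protocol with $f(n)=\Theta(\log^r n)$ identifiers, at which point Theorem~\ref{th:main} applies almost directly. The subtlety is that identifiers in our earlier model are assigned by an adversary and satisfy the consecutiveness and zero-detection hypotheses, whereas here each agent must itself generate and recognize its ``identifier'' from its own local computation.

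First I would prove the inclusion $PMSPACE(\log\log n)\subseteq\bigcup_{k\ge1}SNSPACE(\log^k n)$. This is the configuration-graph counting argument, entirely analogous to Proposition~\ref{th:olivier}. A single agent with $O(\log\log n)$ space has at most $2^{O(\log\log n)}=\log^{O(1)} n$ distinct internal states, so a global configuration is described by the multiset of agent states, giving a number of configurations $N=n^{\log^{O(1)} n}$, whence $\log N = \log^{O(1)} n$. One then expresses ``the protocol outputs $y$ on input $x$'' exactly as before: there is a path in the configuration graph from $C_0$ to some $C_j$ with interpretation $y$, and no path from $C_j$ to a configuration of different interpretation. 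Guessing the forward path is in $NSPACE(\log N)$, and the ``no path'' condition is handled by Immerman--Szelepcs\'enyi, so the whole predicate lies in $NSPACE(\log N)=NSPACE(\log^{O(1)} n)$. Since the input is a single multiset, the language is stable under input permutation, placing it in $SNSPACE(\log^k n)$ for some $k$ (recall $SNSPACE(S(n))=MNSPACE(S(n),1)$).

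For the converse inclusion $\bigcup_{k\ge1}SNSPACE(\log^k n)\subseteq PMSPACE(\log\log n)$, I would reuse the machinery of Section~\ref{sec:3p}. The key step is that agents with $O(\log\log n)$ per-agent space can \emph{manufacture} a chain of polylogarithmically many distinct identifiers: each agent stores a counter up to $\Theta(\log n)$ in its $O(\log\log n)$ bits, and by an incrementing protocol analogous to the leader-election-plus-chain construction, the population distinguishes $\Theta(\log n)$ agents each carrying a distinct integer in its memory, reconstructing the role of identifiers $\{0,1,\dots,f(n)-1\}$ with $f(n)=\Theta(\log n)$. Because each agent physically holds its index, the zero-test and consecutiveness hypotheses required by our earlier protocols are satisfied for free (an agent simply compares stored integers). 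Once this synthetic chain exists, the counting protocol (Proposition~\ref{compte}), the Reset protocol (Proposition~\ref{prop:reset}), the input-access protocol (Proposition~\ref{prop:sid}), and the Turing-machine simulation carry over verbatim, and combining identifiers into $m$-tuples as in Proposition~\ref{prop:truc} boosts the usable tape to any $\log^k n$. The simulated machine sees a single multiset as input, so the relevant class is $SNSPACE$ rather than $MNSPACE$.

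The main obstacle I anticipate is the bootstrapping of the identifier chain from per-agent memory rather than from adversarial labels: one must argue carefully that $O(\log\log n)$ bits genuinely suffice to hold an index bounded by $\Theta(\log n)$ and that the incrementation/propagation protocol terminates in a configuration where exactly $\Theta(\log n)$ agents carry distinct consecutive values, all detectable locally. This is where the equivalence $S(n)=\Theta(\log\log n)$ is used sharply: $o(\log\log n)$ space cannot even name $\log n$ values and collapses to semilinear sets (consistent with Table~\ref{tab:Chatzigiannakis}), while $\Omega(\log n)$ space gives the full $SNSPACE(nS(n))$. Once the chain is in place, the remaining reductions are transcriptions of the earlier constructions, so I would keep those brief and concentrate the argument's weight on the identifier-synthesis lemma and the matching $\log N=\log^{O(1)} n$ bound.
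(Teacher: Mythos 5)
Your proposal is correct and follows essentially the same route as the paper: the upper bound is the configuration-graph reachability argument with Immerman--Szelepcs\'enyi and the bound $\log N=\log^{O(1)}n$ coming from the $\log^{O(1)}n$ possible per-agent states, and the lower bound synthesizes $\Theta(\log n)$ identifiers inside the $O(\log\log n)$ per-agent memory (the paper does this with a tournament-style rule where two agents sharing an identifier and both holding a $1$ cause one to increment its identifier) and then invokes the homonym-protocol machinery of Theorem~\ref{th:main}. The one point you flag as the main obstacle, the identifier-bootstrapping lemma, is indeed where the work lies, and your sketch of it matches the paper's construction.
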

\begin{proof}
\textbf{1.} $\bigcup_{k\ge1}SNSPACE(log^kn)\subset PMSPACE(\log\log n)$.

The idea of this proof is quite simple: Let $M\in SNSPACE(log^kn)$.
We can notice that $SNSPACE(log^kn)\subset MNSPACE(log^k n, \log n)$ 
(as the space of computation is the same and symmetric is equivalent
to be a single multiset).
From Theorem \ref{th:main}, there is a population protocol computing M. We will simulate it.
With space $O(\log\log n)$, we can simulate
a population protocol with $O(2^{\log\log n})=O(\log n)$ identifiers.

Indeed, we adapt a bit the counting protocol.
 At the beginning, each agent has identifier $0$ in order to create $\log n$ identifiers.
When two agents with the same \id\ meet, if each one contains the integer $1$, then
the first switch its integer to $0$ and, the other increases its own \id.

We then just need to simulate the behavior of each agent as if they have started 
with their created \id. It requires a space of size $|B|+(d+1)\log\log n$ plus some constant,
which is enough.

\textbf{2.} $PMSPACE(\log\log n)\subset\bigcup_{k\ge1}SNSPACE(log^kn)$:
The proof is similar to the one of Theorem \ref{th:olivier}:
It is a question of accessibility in the  graph of the configurations.
We need to compute the number of possible configurations denoted by $N$.

For each agent, there are $|Q|$ possible states and 4 tapes of length
$\alpha\log\log n$ for some $\alpha$. Hence, there are $|Q|\times|\Gamma|^{4\alpha\log\log n}$
possible states for each agent.

Now $|\Gamma|^{4\alpha\log\log n}=|\Gamma|^{\log(\log^{4\alpha}n)}=\left( \log^{4\alpha}n\right)^{\log |\Gamma|}$

For each possible state, there are at most $n$ agents sharing it.

Hence, $N=O\left(n^{|Q|\times\left( \log^{4\alpha}n\right)^{\log |\Gamma|}}\right)$.

The accessibility can be computed by a machine in space complexity
$O(\log N)$, which means a space $O\left(|Q|\times\left( \log^{4\alpha}n\right)^{\log |\Gamma|}\log n\right)
=O(\log^kn)$ for some $k\in\N$.\hfill \qed

 \end{proof}

With a similar proof, we can get the following result that gives a good clue for the gap
between $\log\log n$ and $\log n$:

\begin{corollary}
Let $f$ such that $f(n)=\Omega(\log\log n)$ and $f(n)=o(\log n)$.

$$SNSPACE(2^{f(n)}f(n))\subset PMSPACE(f(n))\subset SNSPACE(2^{f(n)}\log n).$$
\end{corollary}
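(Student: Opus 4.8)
The plan is to mirror the two inclusions established for Theorem~\ref{th:grec}, replacing the specific value $\log\log n$ by a general $f(n)$ in the admissible range. Throughout I use that $f(n)=\Omega(\log\log n)$ forces $2^{f(n)}=\Omega(\log^r n)$ for some $r>0$, while $f(n)=o(\log n)$ forces $2^{f(n)}=n^{o(1)}=o(n)$; the latter guarantees that there are enough agents to carry $2^{f(n)}$ distinct identifiers.

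For the lower bound $SNSPACE(2^{f(n)}f(n))\subset PMSPACE(f(n))$, I would first have the passively mobile agents elect a leader and create $g(n)=2^{f(n)}$ distinct identifiers. Since each agent owns $O(f(n))$ bits of workspace, an identifier (a number below $2^{f(n)}$) fits in one agent, and the leader can hand out the identifiers $0,1,\dots,2^{f(n)}-1$ one by one from an $f(n)$-bit counter, exactly as the leader distributes tuple-identifiers in Proposition~\ref{prop:truc}; this assignment terminates because $2^{f(n)}<n$. With $g(n)=\Omega(\log n)$ identifiers in hand I would then replay the constructions of Section~\ref{sec:3p} (size counting, the Reset protocol of Proposition~\ref{prop:reset}, input access of Proposition~\ref{prop:sid}, and the Turing-machine simulation), which require only $\Omega(\log n)$ identifiers and so apply even though $g(n)$ may be super-polylogarithmic. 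The one new ingredient is that each of the $g(n)$ chain agents now carries $O(f(n))$ local tape cells rather than a bounded symbol; concatenating them along the chain yields a simulated tape of length $g(n)\cdot f(n)=2^{f(n)}f(n)$. Feeding the symmetric input (available through symbol counts computed on the chain) into this tape simulates any $M\in SNSPACE(2^{f(n)}f(n))$, the nondeterminism being resolved by the same ``identifier of the first agent met'' trick as in the earlier simulation.

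For the upper bound $PMSPACE(f(n))\subset SNSPACE(2^{f(n)}\log n)$, I would reuse the configuration-graph accessibility argument of Proposition~\ref{th:olivier} and Theorem~\ref{th:grec}. A single agent has $|Q|$ control states together with $O(1)$ tapes of length $O(f(n))$ over a fixed alphabet $\Gamma$, hence $|Q|\cdot|\Gamma|^{O(f(n))}=2^{O(f(n))}$ possible states; as at most $n$ agents can share a state, the number of configurations is $N=O\!\left(n^{2^{O(f(n))}}\right)$. Deciding whether some reachable configuration has interpretation $y$ while no configuration reachable from it has a different interpretation is an accessibility/non-accessibility question on this graph, solvable in $NSPACE(\log N)=NSPACE(2^{O(f(n))}\log n)$; the non-accessibility half is handled by Immerman--Szelepcs\'enyi. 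This places the language in $SNSPACE(2^{O(f(n))}\log n)$, which is the bound written as $SNSPACE(2^{f(n)}\log n)$ once the constant in the exponent is absorbed.

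The main obstacle I expect is in the lower bound, not the upper one: one must ensure that the number of created identifiers really reaches $2^{f(n)}$ rather than the $\log n$ produced by the naive binary counting of Theorem~\ref{th:grec}, and, crucially, that the extra factor $f(n)$ in the tape length is legitimately obtained by exploiting the per-agent $O(f(n))$ memory --- a resource absent from the plain homonym model --- without exceeding the $O(f(n))$ space budget once the identifier, the $d$ stored identifiers and the local tape are all accounted for. The looseness between the base $2^{f(n)}$ (lower) and $2^{O(f(n))}$ (upper), and between the factors $f(n)$ and $\log n$, is exactly the gap the corollary is meant to illustrate.
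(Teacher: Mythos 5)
Your proposal is correct and follows essentially the route the paper intends: the paper gives no explicit proof of this corollary, merely noting it follows ``with a similar proof'' to Theorem~\ref{th:grec}, and your argument is exactly that adaptation --- creating $2^{f(n)}$ identifiers to run the chain-based simulation (with the per-agent $O(f(n))$ memory supplying the extra factor $f(n)$ of tape) for the first inclusion, and the configuration-graph accessibility argument with Immerman--Szelepcs\'enyi, with $2^{O(f(n))}$ states per agent, for the second.
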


\section{Summary} \label{sec:5}

From the model given  by Guerraoui and Ruppert \cite{guerraoui2009names}, we introduced a hierarchy according
to the number of distinct identifiers in the population:
\begin{itemize}
\item  The existence of identifiers is useless with a constant number of the identifiers.
\item Homonym population protocols with $\Theta(\log^r n)$ identifiers
  can exactly  recognize any language in $\bigcup_{k \in \mathbb{N}} MNSPACE\left(\log^k n
\right)$.
\item Homonym population protocols with $\Theta(n^\epsilon)$ identifiers
  have same power that homonym population protocols
  with $n$ identifiers.
\end{itemize}

It remains an open and natural question: is the knowledge of consecutive values of two identifiers
 crucial or not?  Our guess is that this knowledge  is essential to compute  
the size of the population. Protocols without this assumption have not been found yet.

Chatzigiannakis \emph{et al.}  \cite{chatzigiannakis2011passively} started a hierarchy over protocols
depending on how much space of computation each agent has.
The paper left an open question on the gap between $o(\log\log n)$ and $O(\log n)$.
We provided an answer, by proving that with $\Theta(\log\log n)$ space,
exactly $\bigcup_{k \in \mathbb{N}} SNSPACE\left(\log^k n
\right)$ is computed. However, it remains the gap between $O(\log\log n)$ and $O(\log n)$, where we 
currently just have the following bounds: 

$SNSPACE(2^{f(n)}f(n))\subset PMSPACE(f(n))\subset SNSPACE(2^{f(n)}\log n)$.


\end{document}